\newtheorem{theorem}{Theorem}[section]
\newtheorem{lemma}[theorem]{Lemma}
\newtheorem{proposition}[theorem]{Proposition}
\theoremstyle{definition}
\newtheorem{definition}[theorem]{Definition}
\theoremstyle{remark}
\newtheorem{example}[theorem]{Example}
\newtheorem{remark}[theorem]{Remark}
\numberwithin{equation}{section}
\newcommand{\norm}[1]{\left\Vert#1\right\Vert}
\newcommand{\prth}[1]{\left(#1\right)}
\newcommand{\lie}[1]{\left[#1\right]}
\newcommand{\set}[1]{\left\{#1\right\}}
\newcommand{\rbra}{\left(} \newcommand{\rket}{\right)}
\newcommand{\sbra}{\left[} \newcommand{\sket}{\right]}
\newcommand{\cbra}{\left\{} 
\newcommand{\abra}{\left\langle} \newcommand{\aket}{\right\rangle}
\newcommand{\nbra}{\left.} \newcommand{\nket}{\right.}
\newcommand{\Cartan}{\mathcal C}
\newcommand{\cns}{\mathcal C}
\newcommand{\CC}{\mathcal C}
\newcommand{\DD}{\mathcal D}
\newcommand{\smooth}{\CC^\infty}
\newcommand{\FF}{\mathcal F}
\newcommand{\hh}{\mathbf h}
\newcommand{\HH}{\mathcal H}
\newcommand{\LL}{\mathcal L}
\newcommand{\KK}{\mathcal K}
\newcommand{\MM}{\mathcal M}
\newcommand{\NN}{\mathcal N}
\newcommand{\nat}{\mathbb N}
\newcommand{\Lact}{\mathop{\mathcal S}}
\newcommand{\RR}{\mathbb R}
\newcommand{\eps}{\varepsilon}
\newcommand{\To}{\longrightarrow}
\newcommand{\tensor}{\otimes}
\newcommand{\diff}{\,\mathrm d}
\newcommand{\dd}[2][{}]{\frac{\!\diff#1}{\!\diff#2}}
\newcommand{\dde}{\dd\eps}
\newcommand{\pp}[2][{}]{\frac{\partial#1}{\partial#2}}
\newcommand{\dx}{\diff x}
\newcommand{\dmx}{\diff^mx}
\newcommand{\dmxi}{\diff^{m-1}x_i}
\newcommand{\dmxj}{\diff^{m-1}x_j}
\newcommand{\dy}{\diff y}
\newcommand{\du}{\diff u}
\newcommand{\dt}{\diff t}
\newcommand{\Id}{\mathit{Id}}
\newcommand{\pr}{\mathop{\mathit{pr}}\nolimits}
\newcommand{\Lie}{\mathfrak L}
\newcommand{\Fields}{\mathfrak X}
\newcommand{\Forms}{\mathit{\Omega}}
\newcommand{\Sections}{\Gamma}
\newcommand{\mathrb}[1]{\mathrm{\mathbf{#1}}}
\newcommand{\vv}{\mathrb v}
\newcommand{\ff}{\mathrb f}
\newcommand{\cf}{\emph{cf.}}
\newcommand{\eg}{\emph{e.g.}}
\newcommand{\etal}{\emph{et~al.}}
\newcommand{\ie}{\emph{i.e.}}
\newcommand{\secref}[1]{\S\ref{#1}}
\begin{document}
\title{Constrained Variational Calculus for Higher Order Classical Field Theories}
\author[C. M. Campos]{Cédric M. Campos}
\address{Instituto de Ciencias Matemáticas\\CSIC-UAM-UC3M-UCM\\Serrano 123\\28006 Madrid\\Madrid (Spain)}
 \email[C. M. Campos]{cedricmc@icmat.es}
\author[M. de León]{Manuel de León}
 \email[M. de León]{mdeleon@icmat.es}
\author[D. Martín de Diego]{David Martín de Diego}
 \email[D. Martín de Diego]{david.martin@icmat.es}
\subjclass[2000]{Primary 70S05, Secondary 70H50, 53C80, 55R10}
\keywords{Skinner and Rusk formalism, higher order field theory, constrained variational calculus, Euler-Lagrange equations, multisymplectic form}
\date{May 12, 2010}
\begin{abstract}
We develop an intrinsic geometrical setting for  higher order constrained field theories. As a main tool we use an appropriate  generalization of the classical  Skinner-Rusk formalism. Some examples of application are studied, in particular, applications to the geometrical description of  optimal control theory for partial differential equations.
\end{abstract}
\maketitle

\section{Introduction} \label{sec:introduction}
 
Classical field theories can be intrinsically described in terms of the geometry of a fiber bundle $\pi:E\to M$ and its associated higher order jet bundles, $J^k\pi$ ($J^1\pi$ for the case of first order field theories). The jet bundles give a geometrical description for higher order partial derivatives of the fiber coordinates of $E$ with respect to those of $M$.

 As in almost every physical theory, the dynamics of a classical field theory is completely determined by a Lagrangian function, that is, a function on the corresponding jet bundle, $L: J^k\pi\to\RR$. For a theory of order $k$, the solutions are sections $\phi$ of the fiber bundle $\pi:E\to M$ such that they extremize the functional
\[ \Lact(\phi) = \int_RL(j^k\phi)\eta, \]
where $\eta$ is a fixed volume form (it is assumed that $M$ is orientable), $R\subseteq M$ is a compact set and $j^k\phi$ is the $k$-jet prolongation of $\phi$.

The most basic result on variational calculus is the construction from the above functional of a set of partial differential equations, the Euler-Lagrange equations, which must be satisfied by any smooth extremal. More interesting, the property of extremizing the problem does not depend on the particular chosen coordinate system (fact noted by J. L. Lagrange during his studies of analytical mechanics), therefore it must be able to write the Euler-Lagrange equations in an intrinsic way.

In this sense and restricting ourselves to first order field theories, when $L:J^1\pi\to M$, the fundamental object is the so-called Poincaré-Cartan form $\Omega_L$ which is an $(m+1)$-form ($\dim M=m$) univocally associated to the Lagrangian. This form is constructed using the geometry of the jet bundle and it is also related with the variational background \cite{GiMmsyI}. Using this form, it is possible to write down the Euler-Lagrange equations in an intrinsic way. Indeed, $\phi$ satisfies the Euler-Lagrange equations (that is, it is a critical point of the action $\Lact$) if and only if
\[ (j^1\phi)^*(i_V\Omega_L)=0, \quad\hbox{for all tangent vector $V$ in $TJ^1\pi$\;}. \]
Moreover, this form plays an important role in the connection between symmetries and conservation laws (see \cite{LnMrtSntm04}).

But many of the Lagrangians which appear in field theories are of higher order (as for instance in elasticity or gravitation), therefore it is interesting to find a fully geometric setting also for these field theories. Besides in first order field theories it is possible to define a Cartan $(m+1)$-form, one of the main difficulties in higher order field theories is that the uniqueness of the Cartan $(m+1)$-form is not guaranteed (contrary to the first order case). In other words, there will be different Cartan forms which carry out the same function in order to define an intrinsic formulation of Euler-Lagrange equations. The main reason of this problem, is the commutativity of repeated partial differentiation. In the literature, we find different approaches to fix the Cartan form for higher order field theories. For instance the approach by Arens \cite{Arns81} which consists of injecting the higher-order Lagrangian to a first-order one in an appropriate space by the introduction into the theory of a great number of variables and Lagrange multipliers. Following this perspective, and from  a more geometrical point of view, the reader is refereed to the papers by Aldaya and Azcárraga \cite{AldAzc78a,AldAzc80a}. A different approach is adopted by García and Muñoz whom described a method of constructing global Poincaré-Cartan forms in the higher order calculus of variations in fibered spaces by means of linear connections (see \cite{GrcMnz82,GrcMnz90}), in particular they show that the Cartan forms depend on the choice of two connections, a linear connection on the base $M$ and a linear connection on the vertical bundle $\mathcal{V}\pi$. In 1990, Crampin and Saunders \cite{SndCrmp90} proposed the use of an operator analogous to the almost tangent structure canonically defined on the tangent bundle of a given configuration manifold $M$ for the construction of global Poincaré-Cartan forms. In \cite{CmpsLnMrt09} we have found an intrinsic formalism for higher order field theories avoiding the problem of the degree of arbitrariness in the definition of Cartan forms for a given lagrangian function. We have proposed a differential geometric setting for the equations of motion derived from a higher order Lagrangian function, strongly based on the so-called Skinner and Rusk formalism for mechanics (see also the paper by Vitagliano \cite{Vtgl09}).

In this paper, we introduce constraints in the picture. The constraints are geometrically defined as a submanifold $\cns$ of $J^k\pi$ and a function $L:J^k\pi\to\RR$. In other words, we impose the constraints on the space of sections where the action is defined. We will show that the formalism introduced in our previous paper \cite{CmpsLnMrt09} is adapted to the case of constrained field theories, deriving an intrinsic framework of the constrained Euler-Lagrange equations. For the geometrical description, we start with the fibration $\pi_{W,M}:W\to M$, where $W$ is the fibered product $J^k\pi\times_{J^{k-1}\pi}\Lambda^m_2(J^{k-1}\pi)$, and we induce a submanifold $W^\cns_0$ of $W$ using the constraints given by $\cns$. Using that $W$ has a naturally (pre-)multisymplectic form (by pulling back the canonical multisymplectic form on $\Lambda^m_2(J^{k-1}\pi)$), we deduce an intrinsic and unique expression for a Cartan type equation for the Euler-Lagrange equations for constrained higher-order field theories. Additionally, we obtain a resultant constraint algorithm and we give conditions to ensure that the final constraint submanifold is multisymplectic. Finally, we discuss some examples to illustrate the theory.

\section{Notation} \label{sec:notation}
Throughout the paper, we will stick to the following notation. It will be reminded when necessary but, for the ease of the reader, we put it here together.

Lower case Latin (resp. Greek) letters will usually denote indexes that range between $1$ and $m$ (resp. $1$ and $n$). Capital Latin letters will usually denote multi-indexes whose length ranges between $0$ and $k$. In particular and if nothing else it is stated, $I$ and $J$ will usually denote multi-indexes whose length goes from $0$ to $k-1$ and $0$ to $k$, respectively; and $K$ (and sometimes $R$) will denote multi-indexes whose length is equal to $k$. The Einstein notation for repeated indexes and multi-indexes is understood but, for clarity, in some cases the summation for multi-indexes will be indicated.

We denote by $(E,\pi,M)$ a fiber bundle whose base space $M$ is a smooth manifold of dimension $m$, and whose fibers have dimension $n$, thus $E$ is $(m+n)$-dimensional. Adapted coordinate systems on $E$ will be of the form $(x^i,u^\alpha)$, where $(x^i)$ is a local coordinate system on $M$ and $(u^\alpha)$ denotes fiber coordinates. The local volume form associated to $(x^i)$ is $\dmx=\dx^1\wedge\dots\wedge\dx^m$. We write $\dmxi$ for the contraction $i_{\partial/\partial x^i}\dmx$. Note that, for a fixed index $i\in\set{1,\dots,m}$, we have that $\dx^i\wedge\dmxi=\dmx$.

In section \secref{sec:variational.calculus} and all later sections, $M$ shall be supposed oriented and a volume form $\eta$ will be provided. Local coordinates $(x^i)$ on $M$ will then be considered subject to compatibility with $\eta$, that is, such that $\dmx=\eta$. Any $m$-form along a fiber projection over $M$ will be denoted with a curly letter. The same letter in a straight font shall denote its factor with respect to either, the volume form $\eta$ or the local one $\dmx$, \eg\ $\LL=L\dmx$, $\HH=H\dmx$, etc.

Finally, pullbacks of forms and functions by fiber projections will still be denoted by the same symbol. For instance, $\pi^*\eta$ will still be denoted $\eta$ on $E$.

\section{Jet Bundles} \label{sec:jet.bundles}

We fix a fiber bundle $\pi:E\to M$ with $\dim M=m$ and $\dim E=m+n$, and where neither $M$ nor $E$ are necessarily orientable.

Let $\phi,\psi:M\To E$ be two smooth local sections of $\pi$ around a given point $x\in M$. We say that $\phi$ and $\psi$ are \emph{$k$-equivalent} at $x$ (with $k\geq0$) if the sections and all their partial derivatives until order $k$ coincide at $x\in M$, that is, if
\[ \phi(x)=\psi(x)\textrm{ and } \left.\frac{\partial^k\phi^\alpha}{\partial x^{i_1}\cdots\partial x^{i_k}}\right|_x=\left.\frac{\partial^k\psi^\alpha}{\partial x^{i_1}\cdots\partial x^{i_k}}\right|_x, \]
for all $1\leq\alpha\leq n$, $1\leq i_j\leq m$, $1\leq j\leq k$. Note that this is independent of the chosen coordinate system (adapted or not) and, therefore, to be $k$-equivalent is an equivalence relation (see \cite{LnRdrs85,LnRdrs89,Snd89}, for more details).

\begin{definition} \label{def:jet.bundle}
Let $x\in M$, given a smooth local section $\phi\in\Sections_x\pi$, the equivalence class of $k$-equivalent smooth local sections (with $k\geq0$) around $x$ that contains $\phi$ is called the \emph{$k$th jet of $\phi$ at $x$} and is denoted $j^k_x\phi$. The \emph{$k$th jet manifold of $\pi$}, denoted $J^k\pi$, is the whole collection of $k$th jets of arbitrary local sections of $\pi$, that is,
\begin{equation} \label{eq:jet.bundle}
J^k\pi = \set{j^k_x\phi\ :\ x\in M,\ \phi\in\Sections_x\pi}.
\end{equation}
Given a (local) section of $\pi$, its \emph{$k$th lift} is $(j^k\phi)(x)=j^k_x\phi$.
\end{definition}

Adapted coordinates $(x^i,u^\alpha)$ on the total space $E$ induce coordinates $(x^i,u^\alpha_I)$ (with $0\leq|I|\leq k$) on the $k$-jet manifold $J^k\pi$ given by:
\[ u^\alpha_I(j^k_x\phi) = \left.\pp[^{|I|}\phi^\alpha]{x^I}\right|_x, \]
from where we deduce that $J^k\pi$ is actually a smooth manifold of dimension
\[ \dim J^k\pi = m + n\cdot\sum_{l=0}^k\binom{m-1+l}{m-1}. \]
It is readily seen that $(J^k\pi,\pi_k,M)$ is a fiber bundle, where
\[ \pi_k(j^k_x\phi) = x  \quad  (\textrm{in coordinates }\pi_k(x^i,u^\alpha_I) = (x^i)). \]
It is also clear that the $k$-jet manifold $J^k\pi$ fibers over the lower order $l$-jet manifolds $J^l\pi$ (see Diagram \ref{fig:jet.chain}), with $0\leq l<k$, where by definition $J^0\pi=E$ and where the projections are given by:
\[ \pi_{k,l}(j^k_x\phi) = j^l_x\phi  \quad  (\textrm{in coordinates }\pi_{k,l}(x^i,u^\alpha_I) = (x^i,u^\alpha_J), \textrm{ with }0\leq|I|\leq k,0\leq|J|\leq l). \]
\begin{figure}[!h]
\[\xymatrix{
  J^k\pi \ar[r]^{\pi_{k,k-1}} \ar[ddrrrr]_{\pi_k} & \cdots \ar[r]^{\pi_{3,2}} & J^2\pi \ar[r]^{\pi_{2,1}} \ar[ddrr]_{\pi_2} & J^1\pi \ar[r]^{\pi_{1,0}} \ar[ddr]_{\pi_1} & E \ar[dd]^\pi \\ \\
  & & & & M \ar@/_1.2pc/[uu]_\phi \ar@/^2pc/[uullll]^{j^k\phi}
  }\]
\caption{Chain of jets} \label{fig:jet.chain}
\end{figure}
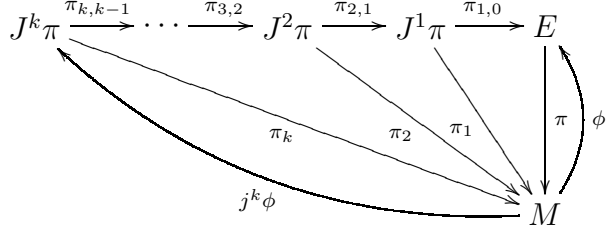

\begin{definition} \label{def:dual.jet.bundle}
The \emph{$k$th dual jet bundle}, denoted $J^k\pi^\dag$, is the subbundle of $\Lambda^mJ^{k-1}\pi$ of $\pi_{k-1}$-semi-basic $m$-forms over $J^{k-1}\pi$, that is,
\begin{equation} \label{eq:dual.jet.bundle}
J^k\pi^\dag = \Lambda^m_2J^{k-1}\pi =  \set{ \omega\in\Lambda^mJ^{k-1}\pi\ :\ i_{v_1}i_{v_2}\omega=0,\ v_1,v_2\in\mathcal{V}\pi_{k-1} },
\end{equation}
where $\mathcal{V}\pi_{k-1}$ is the vertical bundle of $\pi_{k-1}:J^{k-1}\pi\to M$.
\end{definition}

\begin{remark}
The original definition of the $k$th dual jet bundle is as the extended dual affine bundle of the iterated jet bundle $(J^1\pi_{k-1},(\pi_{k-1})_{1,0},J^{k-1}\pi)$. Observe that $J^k\pi$ is naturally embedded as an affine subbundle of $J^1\pi_{k-1}$.
\end{remark}

Locally, the elements of $J^k\pi^\dag$ are of the form
\[ p\dmx + p^{Ii}_\alpha\du^\alpha_I\wedge\dmxi, \]
where $0\leq|I|\leq k-1$. Thus, adapted coordinates $(x^i,u^\alpha)$ on $E$ induce coordinates $(x^i,u^\alpha_I,p,p^{Ii}_\alpha)$ on $J^k\pi^\dag$. Note that $(x^i,u^\alpha_I)$ are adapted coordinates on $J^{k-1}\pi$ and that $(p,p^{Ii}_\alpha)$ are fiber coordinates.

While the $k$th jet bundle $J^k\pi$ projects over the lower order jet bundles (Diagram \ref{fig:jet.chain}), the $k$th dual jet bundle is ``embedded'' into the $(k+1)$th dual jet bundle by means of the pullback of the affine projection $\pi_{k+1,k}$ (Diagram \ref{fig:dual.jet.chain}).

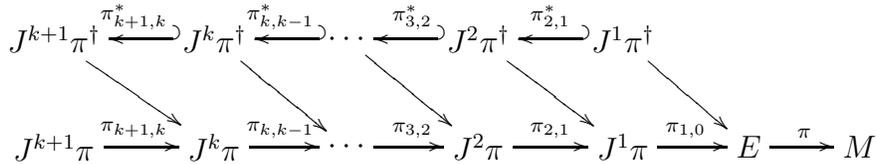
\begin{figure}[!h]
\[\xymatrix{
  J^{k+1}\pi^\dag \ar[dr] & J^k\pi^\dag \ar[dr] \ar@{_{(}->}[l]_{\pi_{k+1,k}^*} & \cdots \ar[dr] \ar@{_{(}->}[l]_{\pi_{k,k-1}^*} & J^2\pi^\dag \ar[dr] \ar@{_{(}->}[l]_{\pi_{3,2}^*} & J^1\pi^\dag \ar[dr] \ar@{_{(}->}[l]_{\pi_{2,1}^*} \\
  J^{k+1}\pi \ar[r]^{\pi_{k+1,k}} & J^k\pi \ar[r]^{\pi_{k,k-1}} & \cdots \ar[r]^{\pi_{3,2}} & J^2\pi \ar[r]^{\pi_{2,1}} & J^1\pi \ar[r]^{\pi_{1,0}} & E \ar[r]^\pi & M
  }\]
\caption{Chain of dual jets} \label{fig:dual.jet.chain}
\end{figure}

The dual jet bundle $J^k\pi^\dag$ has a canonical multi-symplectic structure (see \cite{CntrIbrtLn96,CntrIbrtLn99,CrnyCrmpIbrt91}) and its elements are naturally paired with those of $J^k\pi$. We recall that a form $\Omega$ is multi-symplectic if it is closed and if its contraction with a single tangent vector is injective, that is, $i_V\Omega=0$ if and only if $V=0$.

\begin{definition} \label{def:tautological.form}
The \emph{tautological $m$-form} on $J^k\pi^\dag$ is the form given by
\begin{equation} \label{eq:tautological.form}
\Theta_\omega(V_1,\dots,V_m) = (\tau_{J^k\pi^\dag}^*\omega)(V_1,\dots,V_m),\ \omega\in J^k\pi^\dag,\ V_1,\dots,V_m\in T_\omega J^k\pi^\dag,
\end{equation}
where $\tau_{J^k\pi^\dag}$ is the natural projection from $TJ^k\pi^\dag$ to $J^k\pi^\dag$.
The \emph{canonical multi-symplectic $(m+1)$-form} on $J^k\pi^\dag$ is
\begin{equation} \label{eq:canonical.form}
\Omega = -\diff\Theta.
\end{equation}
\end{definition}

\begin{definition} \label{def:natural.pairing}
The natural pairing between $J^k\pi$ and its dual $J^k\pi^\dag$ is the fibered map $\Phi:J^k\pi\times_{J^{k-1}\pi}J^k\pi^\dag\to\Lambda^mM$ given by
\begin{equation} \label{eq:natural.pairing}
\Phi(j^k_x\phi,\omega)=(j^{k-1}\phi)^*_{j^{k-1}_x\phi}\omega.
\end{equation}
\end{definition}

Let $(x^i,u^\alpha_I,u^\alpha_K)$ and $(x^i,u^\alpha_I,p,p^{Ii}_\alpha)$, where $|I|=0,\dots,k-1$ and $|K|=k$, denote adapted coordinates on $J^k\pi$ and $J^k\pi^\dag$, respectively. Then, the tautological form and the canonical one are locally written
\begin{equation} \label{eq:canonical.form.coord}
\Theta = p\dmx+p^{Ii}_\alpha\du^\alpha_I\wedge\dmxi
\quad\textrm{and}\quad
\Omega = -\diff p\wedge\dmx-\diff p^{Ii}_\alpha\wedge\du^\alpha_I\wedge\dmxi,
\end{equation}
and the fibered pairing between the elements of $J^k\pi$ and $J^k\pi^\dag$ is locally written
\begin{equation} \label{eq:natural.pairing.coord}
\Phi(x^i,u^\alpha_I,u^\alpha_K,p,p^{Ii}_\alpha) = (p+p^{Ii}_\alpha u^\alpha_{I+1_i})\dmx.
\end{equation}

With the aim of clarifying the latter sections and to make this work self content, we continue presenting some classical definitions besides of some basic properties, which may be found in \cite{KrpkSnd08,Olvr86,Snd89}. 

\begin{definition} \label{def:kth.prolongation.morphism}
Let $f:E\To F$ be a morphism between two fiber bundles $(E,\pi,M)$ and $(F,\rho,N)$, such that the induced function on the base, $\check f:M\To N$, is a diffeomorphism. The \emph{$k$th prolongation of $f$} is the map $j^kf:J^k\pi\To J^k\rho$ given by
\begin{equation} \label{eq:kth.prolongation.morphism}
(j^kf)(j^k_x\phi) := j^k_{\check f(x)}\phi_f,\ \forall j^k_x\phi\in J^k\pi,
\end{equation}
where $\phi_f:=f\circ\phi\circ\check f^{-1}$.
\end{definition}

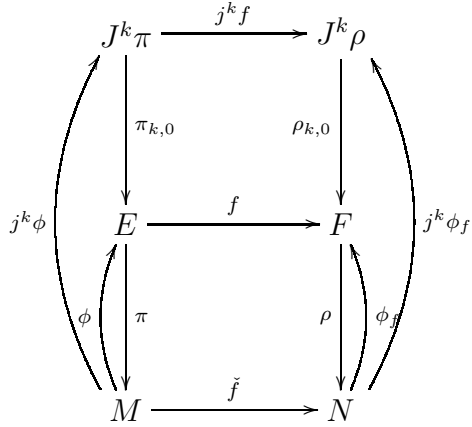
\begin{figure}[!h]
\[ \xymatrix{
  J^k\pi \ar[rr]^{j^kf}     \ar[dd]^{\pi_{k,0}} && J^k\rho \ar[dd]_{\rho_{k,0}} \\ \\
       E \ar[rr]^f          \ar[dd]^\pi         &&       F \ar[dd]_\rho         \\ \\
       M \ar[rr]^{\check f} \ar@/^0.8pc/[uu]^\phi     \ar@<1ex>@/^1.8pc/[uuuu]^{j^k\phi} &&
       N                    \ar@/_0.8pc/[uu]_{\phi_f} \ar@<-1.2ex>@/_1.8pc/[uuuu]_{j^k\phi_f}
} \]
\caption{The $k$th prolongation of a morphism} \label{fig:kth.prolongation.morphism}
\end{figure}

Note that the $k$th prolongation $j^kf$ of a morphism $f$ is both, a morphism between $(J^k\pi,\pi_{k,0},E)$ and $(J^k\rho,\rho_{k,0},F)$, and a morphism between $(J^k\pi,\pi_k,M)$ and $(J^k\rho,\rho_k,N)$. In each case, the induced functions between the base spaces are $f$ and $\check f$, respectively.

\begin{definition} \label{def:total.derivative}
The \emph{(coordinate) total derivatives} are the vector fields along $\pi_{k,k-1}$ locally given by
\begin{equation} \label{eq:total.derivative}
\dd{x^i} = \pp{x^i} + u^\alpha_{I+1_i}\pp{u^\alpha_I} = \pp{x^i} + u^\alpha_i\pp{u^\alpha} + u^\alpha_{ji}\pp{u^\alpha_j} + \dots
\end{equation}
for coordinates $(x^i,u^\alpha_J)$ on $J^k\pi$.
\end{definition}

Let $f\in\smooth(J^k\pi)$, then $\dd[f]{x^i}\in\smooth(J^{k+1}\pi)$ on the corresponding local chart. In general and as the coordinate total derivatives commute, $\dd[^{|J|}f]{x^J}\in\smooth(J^{k+|J|}\pi)$ for any multi-index $J\in\nat^m$. Observe also that total derivatives are closely related to partial derivatives by the relation
\begin{equation} \label{eq:total.and.partial.derivatives}
\dd[^{|J|}f]{x^J}\circ j^{k+|J|}\phi = \pp[^{|J|}(f\circ j^k\phi)]{x^J}
\end{equation}
for any section $\phi\in\Sections\pi$.

\begin{definition} \label{def:contact.form}
A \emph{contact 1-form} on $J^k\pi$ is a 1-form $\theta\in\Lambda^1J^k\pi$ which is pulled back to the zero form on $M$ by the $k$th-lift of any section $\phi$ of $\pi$, that is,
\begin{equation} \label{eq:contact.form}
(j^k\phi)^*\theta=0,\ \forall\phi\in\Sections\pi.
\end{equation}
The set $\Cartan^k$ of all the contact forms is called the \emph{Cartan distribution (of order $k$)}.
\end{definition}

\begin{proposition} \label{th:contact.form.coordinates}
Let $(x^i,u^\alpha_J)$ be adapted coordinates on $J^k\pi$, a basis of the Cartan distribution is given by the contact forms
\begin{equation} \label{eq:contact.form.coordinates}
\theta^\alpha_I = \du^\alpha_I - u^\alpha_{I+1_i}\dx^i,\ 0\leq|I|\leq k-1.
\end{equation}
\end{proposition}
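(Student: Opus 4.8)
The plan is to show first that each $\theta^\alpha_I$ with $0\leq|I|\leq k-1$ belongs to $\Cartan^k$, and then that these forms are linearly independent and span every contact form, so that they constitute a basis.

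For membership, fix a local section $\phi\in\Sections\pi$. Since $\pi_k\circ j^k\phi=\Id_M$, the lift $j^k\phi$ fixes the base coordinates, so that $(j^k\phi)^*\dx^i=\dx^i$, while by the definition of the induced coordinates $(j^k\phi)^*u^\alpha_I=\pp[^{|I|}\phi^\alpha]{x^I}$ for every $I$ with $|I|\leq k$. As $|I+1_i|=|I|+1\leq k$ whenever $|I|\leq k-1$, the coordinate $u^\alpha_{I+1_i}$ is defined on $J^k\pi$ and pulls back to $\pp[^{|I|+1}\phi^\alpha]{x^{I+1_i}}$. Differentiating the first pullback gives $(j^k\phi)^*\du^\alpha_I=\diff\prth{\pp[^{|I|}\phi^\alpha]{x^I}}=\pp[^{|I|+1}\phi^\alpha]{x^{I+1_i}}\dx^i$, which coincides with $(j^k\phi)^*\prth{u^\alpha_{I+1_i}\dx^i}$; hence $(j^k\phi)^*\theta^\alpha_I=0$. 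This is nothing but relation \itref{eq:total.and.partial.derivatives} specialized to the coordinate functions.

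For the basis property I would use the triangular change of coframe $\du^\alpha_I=\theta^\alpha_I+u^\alpha_{I+1_i}\dx^i$ (for $|I|\leq k-1$), which exhibits $\set{\dx^i,\ \theta^\alpha_I\ (|I|\leq k-1),\ \du^\alpha_K\ (|K|=k)}$ as a local coframe of $T^*J^k\pi$; linear independence of the $\theta^\alpha_I$ is then immediate, since each carries the single distinct coordinate differential $\du^\alpha_I$. Writing an arbitrary contact form in this coframe as $\theta=A_i\,\dx^i+C^I_\alpha\,\theta^\alpha_I+D^K_\alpha\,\du^\alpha_K$ and pulling back by $j^k\phi$, the middle block dies by the previous step and there remains $(j^k\phi)^*\theta=\Big[A_i\circ j^k\phi+(D^K_\alpha\circ j^k\phi)\,\pp[^{k+1}\phi^\alpha]{x^{K+1_i}}\Big]\dx^i$. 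The decisive input is that through any point $j^k_{x_0}\phi_0$ of $J^k\pi$ one may choose local sections sharing that $k$-jet whose top derivatives $\pp[^{k+1}\phi^\alpha]{x^R}$ (with $|R|=k+1$) are prescribed arbitrarily (add homogeneous degree $k+1$ terms in $x-x_0$). Setting all of them to zero forces $A_i\equiv0$; feeding back arbitrary values, and for a fixed $i$ regrouping the contraction by the length-$(k+1)$ multi-index $R=K+1_i$, the independence of the derivatives $\pp[^{k+1}\phi^\alpha]{x^R}$ forces $D^{R-1_i}_\alpha=0$, that is $D^K_\alpha=0$ for all $|K|=k$, because $K\mapsto K+1_i$ is a bijection from the length-$k$ multi-indices onto the length-$(k+1)$ multi-indices whose $i$th entry is positive. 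Hence $\theta=C^I_\alpha\theta^\alpha_I$, and the spanning is proved.

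I expect the only genuine obstacle to be this last step: one must keep track of the fact that the top order derivatives are constrained solely by the commutativity of mixed partials, which is exactly what the multi-index labels encode, so that the regrouping $R=K+1_i$ isolates each coefficient $D^K_\alpha$ with no double counting. The membership computation is, by contrast, a direct cancellation, and the independence is purely formal.
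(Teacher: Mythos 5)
Your proof is correct. The paper states this proposition without giving a proof of its own (it is a standard fact quoted from the jet-bundle literature, \eg\ \cite{Snd89}), so there is nothing to compare against; your argument --- the pull-back computation $(j^k\phi)^*\theta^\alpha_I=0$ via \itref{eq:total.and.partial.derivatives} for membership, and the triangular coframe change together with the realization of arbitrary top-order derivatives by adding homogeneous degree-$(k+1)$ polynomial perturbations for the spanning step --- is the standard one, and the regrouping $R=K+1_i$ correctly isolates each coefficient $D^K_\alpha$.
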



\begin{definition} \label{def:klift.vector}
Given a vector field $\xi$ on $E$, its \emph{$k$th-lift} (or \emph{$k$th-jet}) is the unique vector field $\xi^{(k)}$ on $J^k\pi$ that is projectable to $\xi$ by $\pi_{k,0}$ and preserves the Cartan distribution with respect to the Lie derivative.
\end{definition}

\begin{proposition} \label{th:klift.vector}
Let $\xi$ be a vector field on $E$. If $\xi$ has the local expression
\begin{equation}
\xi = \xi^i\pp{x^i} + \xi^\alpha\pp{u^\alpha}
\end{equation}
in adapted coordinates $(x^i,u^\alpha)$ on $E$, then its $k$th-lift $\xi^{(k)}$ has the form
\begin{equation} \label{eq:klift.vector}
\xi^{(k)} = \xi^i\pp{x^i} + \xi^\alpha_J\pp{u^\alpha_J}
\end{equation}
for the induced coordinates $(x^i,u^\alpha_J)$ on $J^k\pi$, where
\begin{equation} \label{eq:klift.vector.components}
\xi^\alpha_0=\xi^\alpha \quad \textrm{and} \quad \xi^\alpha_{I+1_i}=\dd[\xi^\alpha_I]{x^i}-u^\alpha_{I+1_j}\dd[\xi^j]{x^i}.
\end{equation}
In particular, if $\xi$ is vertical with respect to $\pi$, then $\xi^\alpha_J=\diff^{|J|}\xi^\alpha/\dx^J$.
\end{proposition}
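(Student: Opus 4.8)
The plan is to characterize the $k$th-lift $\xi^{(k)}$ by the two defining conditions of Definition~\ref{def:klift.vector} — projectability to $\xi$ under $\pi_{k,0}$ and preservation of the Cartan distribution — and to read off the components of $\xi^{(k)}$ in adapted coordinates directly from these conditions, proceeding by induction on the length $|I|$ of the multi-index. Projectability immediately forces the base components to be $\xi^i$ and the lowest fiber component to satisfy $\xi^\alpha_0=\xi^\alpha$, so the content of the statement lies entirely in the recursion~\itref{eq:klift.vector.components} for the higher components $\xi^\alpha_{I+1_i}$.

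First I would write $\xi^{(k)}=\xi^i\,\partial/\partial x^i+\xi^\alpha_J\,\partial/\partial u^\alpha_J$ with the base components fixed by projectability, treating the $\xi^\alpha_J$ as unknowns to be determined. The key device is to impose $\Lie_{\xi^{(k)}}\theta^\alpha_I\in\Cartan^k$ for every contact form $\theta^\alpha_I=\du^\alpha_I-u^\alpha_{I+1_i}\dx^i$ from Proposition~\ref{th:contact.form.coordinates}. I would compute this Lie derivative via Cartan's magic formula $\Lie_X=i_X\diff+\diff i_X$, so that
\begin{equation} \label{eq:lie.contact}
\Lie_{\xi^{(k)}}\theta^\alpha_I = \diff\xi^\alpha_I - \seq{\xi^{(k)}u^\alpha_{I+1_i}}\dx^i - u^\alpha_{I+1_i}\,\diff\xi^i.
\end{equation}
Since $\diff\xi^\alpha_I=(\dd[\xi^\alpha_I]{x^j})\dx^j+(\text{contact terms})$ modulo $\Cartan^k$ — using that $\du^\beta_J\equiv u^\beta_{J+1_j}\dx^j$ up to contact forms — the requirement that~\itref{eq:lie.contact} again lie in the Cartan distribution is equivalent to the vanishing of its purely horizontal $\dx^j$-component. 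Matching the coefficient of $\dx^i$ against $\xi^{(k)}u^\alpha_{I+1_i}$ and solving yields precisely $\xi^\alpha_{I+1_i}=\dd[\xi^\alpha_I]{x^i}-u^\alpha_{I+1_j}\dd[\xi^j]{x^i}$, establishing the recursion; the uniqueness asserted in the definition follows because at each step the condition determines $\xi^\alpha_{I+1_i}$ without ambiguity.

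The main obstacle will be the bookkeeping in separating horizontal from contact components cleanly: because $\diff\xi^\alpha_I$ and $\diff\xi^i$ expand into sums over all $\du^\beta_J$, I must carefully rewrite each such differential as $u^\beta_{J+1_j}\dx^j$ modulo $\Cartan^k$ and then collect the genuine $\dx$-coefficients, which is where the total derivative $\dd{x^i}$ from Definition~\ref{def:total.derivative} enters and accounts for the appearance of $u^\alpha_{I+1_j}$ in the correction term. The final claim for vertical $\xi$ is then immediate: setting $\xi^i=0$ kills the correction term $u^\alpha_{I+1_j}\dd[\xi^j]{x^i}$, so the recursion collapses to $\xi^\alpha_{I+1_i}=\dd[\xi^\alpha_I]{x^i}$, and iterating from $\xi^\alpha_0=\xi^\alpha$ gives $\xi^\alpha_J=\diff^{|J|}\xi^\alpha/\dx^J$.
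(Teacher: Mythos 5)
The paper does not actually prove this proposition: it appears in the block of ``classical definitions \dots\ besides of some basic properties'' that the authors defer to \cite{KrpkSnd08,Olvr86,Snd89}, so there is no in-paper argument to compare yours against. On its own merits, your plan is the standard derivation and it is sound. Projectability under $\pi_{k,0}$ does pin down $\xi^i$ and $\xi^\alpha_0=\xi^\alpha$; your formula \itref{eq:lie.contact} is correct (either Cartan's formula or the Leibniz rule for $\Lie$ on $\du^\alpha_I-u^\alpha_{I+1_i}\dx^i$ gives $\diff\xi^\alpha_I-\xi^\alpha_{I+1_i}\dx^i-u^\alpha_{I+1_i}\diff\xi^i$, the terms $\xi^i\du^\alpha_{I+1_i}$ cancelling); and because, inductively, $\xi^\alpha_I$ depends only on coordinates of order at most $|I|\leq k-1$, both $\diff\xi^\alpha_I$ and $\diff\xi^i$ reduce modulo $\Cartan^k$ to total-derivative horizontal parts with no stray $\du^\alpha_K$, $|K|=k$, components, so membership in $\Cartan^k$ really is equivalent to killing the $\dx^i$-coefficients, which yields \itref{eq:klift.vector.components}. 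One point you should make explicit: a multi-index $J$ with $|J|\geq2$ decomposes as $I+1_i$ in several ways, so the recursion over-determines $\xi^\alpha_J$. For the proposition as stated this is harmless, since Definition \ref{def:klift.vector} already posits the existence of the lift and hence the various expressions must agree; but if your computation were also meant to establish existence (and hence the uniqueness you appeal to), you would need to check that the right-hand side of \itref{eq:klift.vector.components} is independent of the chosen decomposition of $J$. The vertical case is handled correctly, using commutativity of the total derivatives.
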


\begin{proposition} \label{th:klift.flow}
Let $\psi_\eps$ be the flow of a given $\pi$-projectable vector field $\xi$ over $E$. Then, the flow of $\xi^{(k)}$ is the $k$th-lift of $\psi_\eps$, $j^k\psi_\eps$.
\end{proposition}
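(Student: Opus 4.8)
The plan is to show that the prolonged maps $\{j^k\psi_\eps\}$ constitute the flow of $\xi^{(k)}$, by first checking that they form a one-parameter group of (local) diffeomorphisms of $J^k\pi$, and then identifying the infinitesimal generator of this group with $\xi^{(k)}$ through the uniqueness clause of Definition \ref{def:klift.vector}. Since $\xi$ is $\pi$-projectable, its flow $\psi_\eps$ consists of (local) fiber bundle automorphisms of $E$ covering the flow $\check\psi_\eps$ of the projected field on $M$; as each $\check\psi_\eps$ is a diffeomorphism, the $k$th prolongation $j^k\psi_\eps$ is well defined by Definition \ref{def:kth.prolongation.morphism}.

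First I would establish the group law. From $\psi_0=\Id_E$ one reads off $j^k\psi_0=\Id_{J^k\pi}$ directly from \eqref{eq:kth.prolongation.morphism}, and the flow property $\psi_{\eps+\eps'}=\psi_\eps\circ\psi_{\eps'}$ reduces to the functoriality $j^k(f\circ g)=j^kf\circ j^kg$ of the prolongation. The latter follows by unwinding the definition: writing $\check{(f\circ g)}=\check f\circ\check g$ and $\phi_{f\circ g}=(\phi_g)_f$, formula \eqref{eq:kth.prolongation.morphism} gives $j^k(f\circ g)(j^k_x\phi)=j^k_{\check f\check g(x)}(\phi_g)_f=j^kf\bigl(j^k_{\check g(x)}\phi_g\bigr)$. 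Smooth dependence on $\eps$ is inherited from the joint smoothness of $\psi_\eps$, since prolongation is built out of derivatives of the maps involved. Hence $\{j^k\psi_\eps\}$ is a flow, whose generator I denote by $X$.

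It remains to verify that $X$ satisfies the two defining properties of $\xi^{(k)}$. Projectability is immediate: each $j^k\psi_\eps$ is a morphism over $(J^k\pi,\pi_{k,0},E)$ covering $\psi_\eps$, so $\pi_{k,0}\circ j^k\psi_\eps=\psi_\eps\circ\pi_{k,0}$, and differentiating at $\eps=0$ shows that $X$ is $\pi_{k,0}$-projectable onto $\xi$. For the preservation of the Cartan distribution, the key computation is the commutation relation $j^k\psi_\eps\circ j^k\phi=j^k(\phi_{\psi_\eps})\circ\check\psi_\eps$, which again drops out of \eqref{eq:kth.prolongation.morphism}. Given a contact form $\theta$ and a section $\phi\in\Sections\pi$, this yields $(j^k\phi)^*(j^k\psi_\eps)^*\theta=\check\psi_\eps^*\bigl((j^k\phi_{\psi_\eps})^*\theta\bigr)=0$ by Definition \ref{def:contact.form}, since $\phi_{\psi_\eps}$ is again a section of $\pi$; thus each $(j^k\psi_\eps)^*\theta$ is contact. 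Differentiating the identity $(j^k\phi)^*(j^k\psi_\eps)^*\theta=0$ at $\eps=0$ and interchanging the fixed pullback $(j^k\phi)^*$ with $\left.\dde\right|_{\eps=0}$ gives $(j^k\phi)^*\Lie_X\theta=0$, \ie\ $\Lie_X\theta\in\Cartan^k$, so $X$ preserves the Cartan distribution. By the uniqueness in Definition \ref{def:klift.vector}, $X=\xi^{(k)}$, which is exactly the assertion.

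I expect the main difficulty to be organizational rather than technical: the two facts doing the real work are the functoriality of the prolongation and the commutation relation $j^k\psi_\eps\circ j^k\phi=j^k(\phi_{\psi_\eps})\circ\check\psi_\eps$, both of which follow directly from the definition of $j^kf$. The only points requiring genuine care are justifying the smooth dependence of $j^k\psi_\eps$ on $\eps$ and the interchange of $\left.\dde\right|_{\eps=0}$ with pullback by the fixed map $j^k\phi$, together with the bookkeeping needed to handle the merely local character of the flow when $\xi$ is incomplete.
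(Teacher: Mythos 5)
Your argument is correct, and it is the standard one. The paper itself states Proposition \ref{th:klift.flow} without proof (it is quoted as a known fact from the jet-bundle literature, \eg\ Saunders), so there is no in-text argument to contrast with; but your route --- functoriality $j^k(f\circ g)=j^kf\circ j^kg$ of the prolongation to get the one-parameter group property, the commutation relation $j^k\psi_\eps\circ j^k\phi=j^k(\phi_{\psi_\eps})\circ\check\psi_\eps$ to show each $(j^k\psi_\eps)^*\theta$ is again a contact form, and then the uniqueness clause of Definition \ref{def:klift.vector} to identify the generator with $\xi^{(k)}$ --- is exactly how this is proved in the references. The only points worth making explicit if you were to write this out in full are the two you already flag yourself: smooth joint dependence of $j^k\psi_\eps$ on $\eps$ (clear in coordinates, since the fiber coordinates of $j^k\psi_\eps$ are polynomial in the partial derivatives of $\psi_\eps$ and $\check\psi_\eps^{-1}$ up to order $k$), and the fact that local flows suffice since the identification of generators is a pointwise statement.
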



\section{Variational Calculus} \label{sec:variational.calculus}

The dynamics in classical field theory is specified giving a Lagrangian density: A Lagrangian density is a mapping $\LL:J^k\pi\to\Lambda^mM$. Fixed a volume form $\eta$ on $M$, there is a smooth function $L:J^k\pi\to\RR$ such that $\LL=L\eta$.

\begin{definition} \label{def:integral.action}
Given a Lagrangian density $\LL:J^k\pi\To\Lambda^mM$, the associated \emph{integral action} is the map $\Lact:\Sections\pi\times\KK\To\RR$ given by
\begin{equation} \label{eq:integral.action}
\Lact(\phi,R)=\int_R(j^k\phi)^*\LL,
\end{equation}
where $\KK$ is the collection of smooth compact regions of $M$.
\end{definition}

\begin{definition} \label{def:variation}
Let $\phi$ be a section of $\pi$. A \emph{(vertical) variation} of $\phi$ is a curve $\eps\in I\mapsto\phi_\eps\in\Sections\pi$ (for some interval $I\subset\RR$) such that $\phi_\eps=\varphi_\eps\circ\phi\circ\check\varphi_\eps^{-1}$, where $\varphi_\eps$ is the flow of a (vertical) $\pi$-projectable vector field $\xi$ on $E$.
\end{definition}

When $\xi$ is vertical, then its flow $\varphi_\eps$ is an automorphism of fiber bundles over the identity for each $\eps\in I$, \ie\ $\check\varphi_\eps=\Id_M$.

\begin{definition} \label{def:critical.point}
We say that $\phi\in\Sections\pi$ is a \emph{critical} or \emph{stationary point} of the Lagrangian action $\Lact$ if and only if
\begin{equation} \label{eq:critical.point}
\dde\lie{\Lact(\phi_\eps,R)}\Big|_{\eps=0} = \dde\lie{\int_R(j^k\phi_\eps)^*\LL}\bigg|_{\eps=0} = 0,
\end{equation}
for any vertical variation $\phi_\eps$ of $\phi$ whose associated vertical field vanishes outside of $\pi^{-1}(R)$ and where $R_\eps=\check\varphi_\eps(R)$.
\end{definition}

\begin{lemma} \label{th:lie.derivative}
Let $\phi_\eps=\varphi_\eps\circ\phi\circ\check\varphi_\eps^{-1}$ be a variation of a section $\phi\in\Sections\pi$. If $\xi$ denotes the infinitesimal generator of $\varphi_\eps$, then
\begin{equation} \label{eq:lie.derivative}
\dde\lie{(j^k(\varphi_\eps\circ\phi)^*_x\omega}\Big|_{\eps=0} = (j^k\phi)^*_x(\Lie_{\xi^{(k)}}\omega),
\end{equation}
for any differential form $\omega\in\Forms(J^k\pi)$.
\end{lemma}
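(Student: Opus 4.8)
The plan is to reduce the statement to the defining property of the Lie derivative as the infinitesimal generator of pullback along a flow, using that the $k$th prolongation sends the flow $\varphi_\eps$ on $E$ to the flow of $\xi^{(k)}$ on $J^k\pi$. The first step is to read the prolonged map on the left correctly: by Definition \ref{def:kth.prolongation.morphism} applied to the bundle automorphism $\varphi_\eps$, one has $(j^k\varphi_\eps)(j^k_x\phi)=j^k_{\check\varphi_\eps(x)}(\varphi_\eps\circ\phi\circ\check\varphi_\eps^{-1})$, so that $j^k(\varphi_\eps\circ\phi)=j^k\varphi_\eps\circ j^k\phi$ as maps $M\To J^k\pi$. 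Functoriality of the pullback then factors the integrand,
\[ (j^k(\varphi_\eps\circ\phi))^*_x\omega=(j^k\phi)^*_x\bigl((j^k\varphi_\eps)^*\omega\bigr), \]
isolating the entire $\eps$-dependence inside $(j^k\varphi_\eps)^*\omega$.

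The second step is to differentiate under the now $\eps$-independent outer pullback. At the fixed base point $x$, the operator $(j^k\phi)^*_x$ is the pullback induced by the fixed linear map $T_x(j^k\phi)\colon T_xM\To T_{j^k_x\phi}J^k\pi$, hence a fixed linear map between finite-dimensional exterior algebras that does not involve $\eps$; differentiation in $\eps$ therefore passes through it, giving
\[ \dde\lie{(j^k(\varphi_\eps\circ\phi))^*_x\omega}\Big|_{\eps=0}=(j^k\phi)^*_x\Bigl(\dde\lie{(j^k\varphi_\eps)^*\omega}\Big|_{\eps=0}\Bigr). \]

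The third step identifies the inner derivative. By Proposition \ref{th:klift.flow}, since $\varphi_\eps$ is the flow of $\xi$, the family $j^k\varphi_\eps$ is the flow of $\xi^{(k)}$: it is a one-parameter group by functoriality of prolongation, with $j^k\varphi_0=\Id$, and its infinitesimal generator is $\xi^{(k)}$. Consequently, by the very definition of the Lie derivative as the $\eps$-derivative at the identity of pullback along the flow,
\[ \dde\lie{(j^k\varphi_\eps)^*\omega}\Big|_{\eps=0}=\Lie_{\xi^{(k)}}\omega, \]
and substituting this into the previous display yields the asserted equality $(j^k\phi)^*_x(\Lie_{\xi^{(k)}}\omega)$.

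I expect the only genuinely delicate point to be the interchange of $\dde$ with $(j^k\phi)^*_x$ in the second step; this is the step to justify with care, even though it is ultimately harmless because at a fixed point the pullback is constant in $\eps$. The remaining ingredients---the factorization $j^k(\varphi_\eps\circ\phi)=j^k\varphi_\eps\circ j^k\phi$ and the identification of $j^k\varphi_\eps$ as the flow of $\xi^{(k)}$---are immediate from Definition \ref{def:kth.prolongation.morphism} and Proposition \ref{th:klift.flow}, respectively.
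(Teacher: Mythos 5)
Your proof is correct and takes essentially the same approach as the paper's: both rest on the factorization $j^k(\varphi_\eps\circ\phi)=j^k\varphi_\eps\circ j^k\phi$, on Proposition \ref{th:klift.flow} identifying $j^k\varphi_\eps$ as the flow of $\xi^{(k)}$, and on commuting $\dde$ past the fixed pullback $(j^k\phi)^*_x$. The paper merely runs the same chain of equalities in the reverse direction (and leaves the factorization and interchange implicit, which you spell out).
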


\begin{proof}
From Proposition \ref{th:klift.flow}, we have that $\xi^{(k)}$ is the infinitesimal generator of $j^k\varphi_\eps$. We then obtain by a direct computation,
\[ (j^k\phi)^*_x(\Lie_{\xi^{(k)}}\omega) = (j^k\phi)^*_x\prth{\dde\lie{(j^k\varphi_\eps)^*\omega}\Big|_{\eps=0}} = \dde\lie{(j^k\varphi_\eps\circ j^k\phi)^*_x\omega}\Big|_{\eps=0}. \]
\end{proof}

The following lemma will show to be useful in the variational derivation of the higher-order Euler-Lagrange equations.

\begin{lemma}[Higher-order integration by parts] \label{th:integration.by.parts}
Let $R\subset M$ be a smooth compact region and let $f,g:R\To\RR$ be two smooth functions. Given any multi-index $J\in\nat^m$, we have that
\begin{equation} \label{eq:integration.by.parts}
\int_R\pp[^{|J|}f]{x^J}g\dmx = (-1)^{|J|}\int_Rf\pp[^{|J|}g]{x^J}\dmx
+ \!\!\!\!\! \sum_{I_f+I_g+1_i=J} \!\!\!\!\! \lambda(I_f,I_g,J)\int_{\partial R}\pp[^{|I_f|}f]{x^{I_f}}\pp[^{|I_g|}g]{x^{I_g}}\dmxi,
\end{equation}
where $\lambda$ is given by the expression
\begin{equation} \label{eq:integration.by.parts.lambda}
\lambda(I_f,I_g,J) := (-1)^{|I_g|} \cdot \frac{|I_f|!\cdot|I_g|!}{|J|!} \cdot \frac{J!}{I_f!\cdot I_g!}.
\end{equation}
\end{lemma}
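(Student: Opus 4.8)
The plan is to reduce the statement to the elementary first-order integration by parts and then to recover the symmetric coefficients $\lambda$ by averaging over the orderings of the multi-index $J$. As the base case, for a single smooth function $h$ on $R$ and a fixed direction $i$, the identity $\diff(h\,\dmxi)=\pp[h]{x^i}\dmx$ (which follows from $\dx^i\wedge\dmxi=\dmx$ and $\dx^j\wedge\dmxi=0$ for $j\neq i$) together with Stokes' theorem gives $\int_R\pp[h]{x^i}\dmx=\int_{\partial R}h\,\dmxi$. Taking $h=fg$ and using the Leibniz rule yields the first-order formula
\[ \int_R\pp[f]{x^i}\,g\,\dmx = -\int_R f\,\pp[g]{x^i}\,\dmx + \int_{\partial R}fg\,\dmxi, \]
which is exactly the asserted identity when $|J|=1$, since then $\lambda(0,0,1_i)=1$.

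Next I would establish, by induction on $N:=|J|$, an \emph{ordered} (non-symmetric) version of the statement. Writing $\partial_a$ for $\pp{x^a}$ and fixing a list of directions $a_1,\dots,a_N\in\set{1,\dots,m}$, iterating the first-order formula --- moving the derivatives across one at a time, from $\partial_{a_1}$ to $\partial_{a_N}$ --- gives
\[ \int_R(\partial_{a_1}\cdots\partial_{a_N}f)\,g\,\dmx = (-1)^N\int_R f\,(\partial_{a_1}\cdots\partial_{a_N}g)\,\dmx + \sum_{t=1}^N(-1)^{t-1}\int_{\partial R}(\partial_{a_{t+1}}\cdots\partial_{a_N}f)(\partial_{a_1}\cdots\partial_{a_{t-1}}g)\,\diff^{m-1}x_{a_t}. \]
Here the $t$-th boundary term differentiates $f$ along the multiset $\set{a_{t+1},\dots,a_N}$, differentiates $g$ along $\set{a_1,\dots,a_{t-1}}$, and carries the form $\diff^{m-1}x_{a_t}$; setting $I_f=\set{a_{t+1},\dots,a_N}$, $I_g=\set{a_1,\dots,a_{t-1}}$ and $i=a_t$ one has $I_f+I_g+1_i=J$ and the sign is $(-1)^{t-1}=(-1)^{|I_g|}$.

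Finally I would symmetrise. Since partial derivatives commute, $\partial_{a_1}\cdots\partial_{a_N}=\pp[^{|J|}]{x^J}$ for every ordering $(a_1,\dots,a_N)$ of the multiset prescribed by $J$, of which there are $\frac{|J|!}{J!}$. (A naive induction fixing a single direction already gives a valid formula, but with non-symmetric boundary terms; averaging is what produces the stated coefficients.) Summing the ordered identity over all these orderings and dividing by their number leaves the left-hand side and the bulk term untouched, while each decomposition $(I_f,I_g,i)$ with $I_f+I_g+1_i=J$ is realised by exactly $\frac{|I_f|!}{I_f!}\cdot\frac{|I_g|!}{I_g!}$ orderings --- choose an ordering of $I_g$ for the first $|I_g|$ slots, place $i$ in the next slot, and an ordering of $I_f$ in the final $|I_f|$ slots. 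Dividing this count by $\frac{|J|!}{J!}$ and attaching the sign gives
\[ (-1)^{|I_g|}\cdot\frac{|I_f|!\,|I_g|!}{|J|!}\cdot\frac{J!}{I_f!\,I_g!}=\lambda(I_f,I_g,J), \]
which is the asserted coefficient, and the summed identity is then precisely \itref{eq:integration.by.parts}.

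The routine parts are the first-order formula (immediate from Stokes) and the ordered iterated identity (a direct induction on $N$). The main obstacle is the combinatorial bookkeeping of the last step: one must verify that each ordered boundary term depends only on the data $(I_f,I_g,i)$ --- not on the particular ordering, which holds because the partial derivatives commute --- so that the terms may be grouped, and that the quotient of the two multinomial counts reproduces exactly $\frac{|I_f|!\,|I_g|!}{|J|!}\cdot\frac{J!}{I_f!\,I_g!}$ together with the correct sign $(-1)^{|I_g|}$.
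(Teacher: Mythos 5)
Your proof is correct, and it takes a genuinely different route from the paper's. The paper argues by induction directly on the length of the multi-index: it peels off one derivative via the first-order formula, applies the induction hypothesis, then multiplies by $(J(j)+1)/(l+1)$ and sums over $J+1_j=K$, invoking the identity $\sum_{I+1_i=J}\frac{I(i)+1}{|I|+1}=1$ (Lemma \ref{th:identity}) to recover the symmetric coefficients at each stage; the final step is a somewhat delicate rearrangement of a double sum over boundary terms. You instead separate the analytic content from the combinatorial content: a trivial induction gives the \emph{ordered} identity for a fixed list of directions $(a_1,\dots,a_N)$, and a single counting argument at the end — averaging over the $\frac{|J|!}{J!}$ orderings of the multiset determined by $J$, and observing that each decomposition $I_f+I_g+1_i=J$ is realised by exactly $\frac{|I_f|!}{I_f!}\cdot\frac{|I_g|!}{I_g!}$ of them, with $t=|I_g|+1$ forced — produces $\lambda(I_f,I_g,J)$ in one stroke. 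In effect the paper's weighted sum over $J+1_j=K$ is your averaging performed one derivative at a time, so the two arguments are the same computation organised differently; what your version buys is a transparent interpretation of $\lambda(I_f,I_g,J)$ as a signed fraction of orderings, at the price of having to set up the ordered identity and check that each boundary term depends only on the multiset data (which holds because partial derivatives commute). Both proofs rest on the same first-order formula from Stokes' theorem, and your combinatorial count is verified correct.
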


\begin{proof}
In this proof, we will use the shorthand notation $f_J=\pp[^{|J|}f]{x^J}$ so as to safe space.

We proceed by induction on the length $l$ of the multi-index $J$. The case  $l=|J|=0$ is a trivial identity and the case $l=|J|=1$ is the well known formula of integration by parts
\[ \int_Rf_ig\dmx =  \int_{\partial R}fg\dmxi - \int_Rfg_i\dmx. \]
Thus, let us suppose that the result is true for any multi-index $J\in\nat^m$ up to length $l>1$, in order to show that it is also true for any multi-index $K\in\nat^m$ of length $l+1$. Let $J$ and $1\leq j\leq m$ such that $J+1_j=K$. We then have,
\begin{eqnarray*}
\int_Rf_{J+1_j}g\dmx
&=& -\int_Rf_Jg_j\dmx + \int_{\partial R}f_Jg\dmxj\\
&=& (-1)^{l+1}\int_Rfg_{J+1_j}\dmx + \int_{\partial R}f_Jg\dmxj\\
& & - \!\!\!\!\! \sum_{I_f+I_{g_j}+1_i=J} \!\!\!\!\! \lambda(I_f,I_{g_j},i)\int_{\partial R}f_{I_f}g_{I_{g_j}+1_j}\dmxi,
\end{eqnarray*}
where we have used the first-order integration formula in first place, to then apply the induction hypothesis.
We now multiply each member by $(J(j)+1)/(l+1)$ and sum over $J+1_j=K$. Using the multi-index identity \eqref{eq:identity}, we have
\begin{eqnarray*}
\int_Rf_Kg\dmx
&=& (-1)^{l+1}\int_Rfg_K\dmx + \sum_{J+1_j=K}\frac{J(j)+1}{l+1}\int_{\partial R}f_Jg\dmxj\\
& & - \sum_{J+1_j=K}\frac{J(j)+1}{l+1}\sum_{I_f+I_{g_j}+1_i=J} \!\!\!\!\! \lambda(I_f,I_{g_j},i)\int_{\partial R}f_{I_f}g_{I_{g_j}+1_j}\dmxi.
\end{eqnarray*}
It only remain to rearrange properly the last two terms to express them in the stated form. Clearly,
\begin{multline*}
\sum_{J+1_j=K}\frac{J(j)+1}{l+1}\int_{\partial R}f_Jg\dmxj =\\
= \sum_{\substack{I_f+I_g+1_i=K\\|I_g|=0}} \!\! (-1)^{|I_g|} \cdot \frac{|I_f|!\cdot|I_g|!}{|K|!} \cdot \frac{K!}{I_f!\cdot I_g!} \int_{\partial R}\pp[^{|I_f|}f]{x^{I_f}}\pp[^{|I_g|}g]{x^{I_g}}\dmxi.
\end{multline*}
The last term is a little bit more tricky,
\begin{multline*}
\sum_{J+1_j=K}\frac{J(j)+1}{l+1} \!\! \sum_{I_f+I_{g_j}+1_i=J} \!\!\! (-1)^{|I_{g_j}|+1} \cdot \frac{|I_f|!\cdot|I_{g_j}|!}{|J|!} \cdot \frac{J!}{I_f!\cdot I_{g_j}!} \int_{\partial R}f_{I_f}g_{I_{g_j}+1_j}\dmxi =\\
\begin{split}
&= \!\! \sum_{I_f+I_{g_j}+1_i+1_j=K} \!\!\!\!\! (-1)^{|I_{g_j}|+1} \cdot \frac{|I_f|!\cdot|I_{g_j}|!}{|K|!} \cdot \frac{K!}{I_f!\cdot I_{g_j}!} \int_{\partial R}f_{I_f}g_{I_{g_j}+1_j}\dmxi\\
&= \!\! \sum_{\substack{I_f+I_g+1_i=K\\|I_g|\geq1}} \!\! (-1)^{|I_g|} \!\! \sum_{I_{g_j}+1_j=I_g} \frac{I_g(j)}{|I_g|} \cdot \frac{|I_f|!\cdot|I_g|!}{|K|!} \cdot \frac{K!}{I_f!\cdot I_g!} \int_{\partial R}f_{I_f}g_{I_g}\dmxi\\
&= \!\! \sum_{\substack{I_f+I_g+1_i=K\\|I_g|\geq1}} \!\! (-1)^{|I_g|} \cdot \frac{|I_f|!\cdot|I_g|!}{|K|!} \cdot \frac{K!}{I_f!\cdot I_g!} \int_{\partial R}f_{I_f}g_{I_g}\dmxi
\end{split}
\end{multline*}
where we have used the identity \eqref{eq:identity} again. The result is now clear.
\end{proof}

\begin{theorem}[The higher-order Euler-Lagrange equations] \label{th:euler-lagrange}
Given a fiber section $\phi\in\Sections\pi$, let us consider an infinitesimal variation $\phi_\eps$ of it such that the support $R$ of the associated vertical vector field $\xi$ is contained in a coordinate chart $(x^i)$. We then have that the variation of the Lagrangian action $\Lact$ at $\phi$ is given by
\begin{equation} \label{eq:critical.point.coord}
\begin{array}{rcl}
\displaystyle \dde\Lact(\phi_\eps,R_\eps)\Big|_{\eps=0}
&=& \displaystyle \sum_{|J|=0}^k \sbra (-1)^{|J|}\int_R(j^{2k}\phi)^*\prth{\xi^\alpha\dd[^{|J|}]{x^J}\pp[L]{u^\alpha_J}}\dmx \phantom{\sum_{I_\xi}}\nket\\
& & \displaystyle \nbra + \!\!\!\!\! \sum_{I_\xi+I_L+1_i=J} \!\!\!\!\! \lambda(I_\xi,I_L,J)\int_{\partial R} (j^{2k}\phi)^*\prth{\xi^\alpha_{I_\xi}\dd[^{|I_L|}]{x^{I_L}}\pp[L]{u^\alpha_J}}\dmxi \sket.
\end{array}
\end{equation}
Moreover, $\phi$ is a critical point of the Lagrangian action $\Lact$ if and only if it satisfies the \emph{higher-order Euler-Lagrange equations}
\begin{equation} \label{eq:euler-lagrange}
(j^{2k}\phi)^*\prth{ \sum_{|J|=0}^k(-1)^{|J|}\dd[^{|J|}]{x^J}\pp[L]{u^\alpha_J} } = 0
\end{equation}
on the interior of $M$, plus the boundary conditions
\begin{equation} \label{eq:boundary.conditions}
\dd[^{|I|}]{x^I}\pp[L]{u^\alpha_J}=0,\ 0\leq|I|<|J|\leq k,
\end{equation}
on the boundary $\partial M$ of $M$.
\end{theorem}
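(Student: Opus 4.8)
The plan is to differentiate the action, turn the $\eps$-derivative into a Lie derivative, expand it using the vertical $k$th-lift, and then integrate by parts with Lemma~\ref{th:integration.by.parts}. First, since $\xi$ is vertical we have $\check\varphi_\eps=\Id_M$, so $R_\eps=R$ and $\phi_\eps=\varphi_\eps\circ\phi$. Differentiating under the integral sign and applying Lemma~\ref{th:lie.derivative} pointwise with $\omega=\LL$ gives
\[ \dde\Lact(\phi_\eps,R_\eps)\Big|_{\eps=0} = \int_R(j^k\phi)^*\prth{\Lie_{\xi^{(k)}}\LL}. \]
Because $\xi^{(k)}$ is $\pi_k$-vertical and $\dmx$ is basic, $\Lie_{\xi^{(k)}}\dmx=0$, so $\Lie_{\xi^{(k)}}\LL=(\xi^{(k)}L)\dmx$; by Proposition~\ref{th:klift.vector} the vertical lift reads $\xi^{(k)}=\sum_{|J|=0}^k\dd[^{|J|}\xi^\alpha]{x^J}\pp{u^\alpha_J}$, so the integrand equals $\sum_{|J|=0}^k\dd[^{|J|}\xi^\alpha]{x^J}\pp[L]{u^\alpha_J}$.

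Next I would pass to the base. Writing $\tilde\xi^\alpha=\xi^\alpha\circ\phi$ and $P^J_\alpha=\pp[L]{u^\alpha_J}\circ j^k\phi$, relation~\eqref{eq:total.and.partial.derivatives} converts each pulled-back total derivative into a genuine partial derivative, so the first variation becomes $\sum_{|J|=0}^k\int_R\prth{\partial^{|J|}\tilde\xi^\alpha/\partial x^J}P^J_\alpha\,\dmx$. Applying Lemma~\ref{th:integration.by.parts} to each summand with $f=\tilde\xi^\alpha$ and $g=P^J_\alpha$ yields a bulk term $(-1)^{|J|}\int_R\tilde\xi^\alpha\prth{\partial^{|J|}P^J_\alpha/\partial x^J}\dmx$ together with the boundary sum weighted by $\lambda(I_\xi,I_L,J)$. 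Re-expressing the surviving partial derivatives as total derivatives via \eqref{eq:total.and.partial.derivatives} once more (this raises the jet order to at most $2k$, since the deepest term $\dd[^{|J|}]{x^J}\pp[L]{u^\alpha_J}$ with $|J|=k$ is pulled back by $j^{2k}\phi$) reproduces exactly \eqref{eq:critical.point.coord}, with $I_\xi$ and $I_L$ in the roles of $I_f$ and $I_g$.

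Finally, for the characterization of critical points I would argue in two stages. Restricting to variations whose field $\xi$ vanishes near $\partial M$ kills every boundary integral and leaves $\int_R\tilde\xi^\alpha(j^{2k}\phi)^*\prth{\sum_{|J|=0}^k(-1)^{|J|}\dd[^{|J|}]{x^J}\pp[L]{u^\alpha_J}}\dmx$; since each $\tilde\xi^\alpha$ is an arbitrary compactly supported function, the fundamental lemma of the calculus of variations forces \eqref{eq:euler-lagrange} on the interior. With the Euler-Lagrange equations in force the bulk term vanishes identically, so criticality under variations that need not vanish on $\partial M$ reduces to the vanishing of the boundary sum alone; since on $\partial M$ the jets $\xi^\alpha_{I_\xi}$ (equivalently, the boundary values of $\tilde\xi^\alpha$ and its derivatives of order $|I_\xi|<|J|\le k$) may be prescribed independently, every coefficient must vanish, which after collecting terms gives the natural boundary conditions \eqref{eq:boundary.conditions}.

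The main obstacle I anticipate is precisely this last boundary analysis: disentangling the weighted boundary sum to read off independent coefficients is delicate, because a single derivative $\dd[^{|I_L|}]{x^{I_L}}\pp[L]{u^\alpha_J}$ is multiplied by several factors $\xi^\alpha_{I_\xi}$, and the multi-index weights $\lambda(I_\xi,I_L,J)$ must be shown to recombine (using the same combinatorial identity that drives Lemma~\ref{th:integration.by.parts}) into the clean form \eqref{eq:boundary.conditions}; one must also check that the $\dmxi$-contractions expose only the genuinely free boundary jets of $\xi$, so that no spurious relations are imposed.
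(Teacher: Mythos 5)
Your proposal is correct and follows essentially the same route as the paper: Lemma~\ref{th:lie.derivative} to turn the $\eps$-derivative into a Lie derivative, Proposition~\ref{th:klift.vector} to expand $\xi^{(k)}L$ as $\sum_J\dd[^{|J|}\xi^\alpha]{x^J}\pp[L]{u^\alpha_J}$, relation~\eqref{eq:total.and.partial.derivatives} together with Lemma~\ref{th:integration.by.parts} to produce the bulk and boundary terms, and then the fundamental lemma for interior variations followed by the independent vanishing of the boundary coefficients. The only cosmetic difference is that the paper expands $\Lie_{\xi^{(k)}}\LL$ via Cartan's formula whereas you use $\Lie_{\xi^{(k)}}\dmx=0$ directly; your closing remark about disentangling the weighted boundary sum is a fair caveat, and the paper itself passes over that step just as quickly.
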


\begin{proof}
Let us denote by $\xi$ the vertical field associated to the variation $\phi_\eps$. By Proposition \ref{th:lie.derivative}, Cartan's formula $\Lie=\diff\circ i+i\circ\diff$ and Proposition \ref{th:klift.vector}, we have that
\begin{eqnarray*}
\dde\Lact(\phi_\eps,R)\Big|_{\eps=0}
&=& \int_R(j^k\phi)^*(\Lie_{\xi^{(k)}}\LL)\\
&=& \int_R(j^k\phi)^*\diff(i_{\xi^{(k)}}\LL) + \int_R(j^k\phi)^*i_{\xi^{(k)}}\diff\LL\\
&=& \int_{\partial R}(j^k\phi)^*i_{\xi^{(k)}}\LL + \int_R(j^k\phi)^*(\xi^{(k)}(L)\dmx-\diff L\wedge i_{\xi^{(k)}}\dmx)\\
&=& \int_R(j^k\phi)^*\prth{\sum_{|J|=0}^k\dd[^{|J|}\xi^\alpha]{x^J}\pp[L]{u^\alpha_J}}\dmx
\end{eqnarray*}
If we now apply the higher-order integration by parts \eqref{eq:integration.by.parts} and we take into account that Equation \eqref{eq:total.and.partial.derivatives}, we obtain that
\begin{eqnarray*}
\dde\Lact(\phi_\eps,R)\Big|_{\eps=0}
&=& \sum_{|J|=0}^k \sbra (-1)^{|J|}\int_R(j^{2k}\phi)^*\prth{\xi^\alpha\dd[^{|J|}]{x^J}\pp[L]{u^\alpha_J}}\dmx  \phantom{\sum_{I_\xi}}\nket\\
& & \nbra + \!\!\!\!\! \sum_{I_\xi+I_L+1_i=J} \!\!\!\!\! \lambda(I_\xi,I_L,J)\int_{\partial R} (j^{2k}\phi)^*\prth{\dd[^{|I_\xi|}\xi^\alpha]{x^{I_\xi}}\dd[^{|I_L|}]{x^{I_L}}\pp[L]{u^\alpha_J}}\dmxi \sket,
\end{eqnarray*}
which is the first statement of our theorem.

If we now suppose that $R$ is contained in the interior of $M$, as $\xi$ is null outside of $R$, so it is $\xi^{(k)}$ outside of $R$ and, by smoothness, on its boundary $\partial R$. Thus, if $\phi$ is a critical point of $\Lact$, we then must have that
\[ \dde\Lact(\phi_\eps,R)\Big|_{\eps=0} = \int_R(j^{2k}\phi)^*\prth{\xi^\alpha\sum_{|J|=0}^k(-1)^{|J|}\dd[^{|J|}]{x^J}\pp[L]{u^\alpha_J}}\dmx = 0, \]
for any vertical field $\xi$ whose compact support is contained in $\pi^{-1}(R)$. We thus infer that $\phi$ shall satisfy the higher-order Euler-Lagrange equations \eqref{eq:euler-lagrange} on the interior of $M$.

Finally, if $R$ has common boundary with $M$, we then have that
\begin{equation}
\dde\Lact(\phi_\eps,R)\Big|_{\eps=0} = \sum_{|J|=0}^k\sum_{I_\xi+I_L+1_i=J} \!\!\!\!\! \lambda(I_\xi,I_L,J)\int_{\partial R\cap\partial M} (j^{2k}\phi)^*\prth{\xi^\alpha_{I_\xi}\dd[^{|I_L|}]{x^{I_L}}\pp[L]{u^\alpha_J}}\dmxi = 0.
\end{equation}
As this is true for any vertical field $\xi$ whose compact support is contained in $\pi^{-1}(R)$, we deduce the boundary conditions \eqref{eq:boundary.conditions}.
\end{proof}

\begin{remark} \label{rmk:null.boundary}
If in the Definition \ref{def:critical.point} of a critical point, one further requires to the variations to be null at the boundary $\partial M$ of $M$, in the sense that the associated vector field $\xi$ be zero over $\pi^{-1}(\partial M)$, then would no longer have the boundary condition \eqref{eq:boundary.conditions}. In such a case, the space of solutions would be broader and one could impose boundary conditions to them in order to obtain a particular one.

We also note that the theorem remains true if we choose vector variations, non necessarily vertical. Although the proof would be much more cumbersome.
\end{remark}

\section{The Skinner-Rusk Formalism} \label{sec:sr.formalism}

What follows in this section has already been published in \cite{CmpsLnMrt09}. Here we give a summary of the most relevant facts that are proven in there, introducing some new insights and points of view. The generalization of the Skinner-Rusk formalism to higher order classical field theories will take place in the fibered product
\begin{equation} \label{eq:mixed.space}
W=J^k\pi\times_{J^{k-1}\pi}J^k\pi^\dag.
\end{equation}
The first order case is covered in \cite{LnMrrMrt03,MrnMnzRmn03} or \cite{CntrCrtMrtz02} for time dependent mechanics; see also \cite{SknRsk83a,SknRsk83b} for the original treatment by Skinner and Rusk. The projection on the $i$-th factor will be denoted $\pr_i$ (with $i=1,2$) and the projection as fiber bundle over $J^{k-1}\pi$ will be $\pi_{W,J^{k-1}\pi}=\pi_{k,k-1}\circ\pr_1$ (see Diagram \ref{fig:dynamical.framework}). On $W$, adapted coordinates are of the form $(x^i,u^\alpha_I,u^\alpha_K,p,p^{Ii}_\alpha)$, where $|I|=0,\dots,k-1$ and $|K|=k$. Note that $(x^i,u^\alpha_I)$ are coordinates on $J^{k-1}\pi$ and that $(u^\alpha_K)$ and $(p,p^{Ii}_\alpha)$ are fiber coordinates on $J^k\pi\To J^{k-1}\pi$ and $J^k\pi^\dag\To J^{k-1}\pi$, respectively.

\begin{figure}[!h]
\[\xymatrix{
  && W \ar[lld]_{\pr_1} \ar[dd]^{\pi_{W,J^{k-1}\pi}} \ar[rrd]^{\pr_2}&&\\
  J^k\pi \ar[rrd]^{\pi_{k,k-1}} && && J^k\pi^\dag \ar[lld] \\
  && J^{k-1}\pi \ar[d]^{\pi_{k-1}}&&\\
  && M &&
  }\]
\caption{The Skinner-Rusk framework} \label{fig:dynamical.framework}
\end{figure}
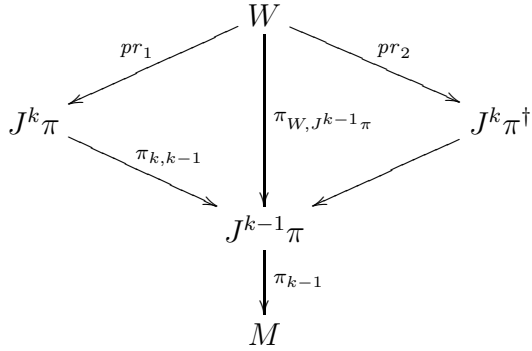

Assume that $\LL:J^k\pi\To\Lambda^mM$ is a Lagrangian density. Together with the pairing $\Phi$ (Equation \eqref{eq:natural.pairing}), we use this Lagrangian $\LL$ to define the dynamical map $\HH$ on $W$ (corresponding to the Hamiltonian density):
\begin{equation} \label{eq:hamiltonian}
\HH = \Phi - \LL\circ\pr_1.
\end{equation}

Consider the canonical multisymplectic $(m+1)$-form $\Omega$ on $J^k\pi^\dag$ (Equation \eqref{eq:canonical.form}), whose pullback to $W$ shall be denoted also by $\Omega$. We define on $W$ the premultisymplectic $(m+1)$-form
\begin{equation} \label{eq:hamiltonian.form}
\Omega_\HH = \Omega + \diff\HH.
\end{equation}
In adapted coordinates
\begin{eqnarray}
\HH &=& \prth{p^{Ii}_\alpha u^\alpha_{I+1_i}+p-L(x^i,u^\alpha_I,u^\alpha_K)}\dmx\\ \label{eq:hamiltonian.form.coord}
\Omega_\HH &=& -\diff p^{Ii}_\alpha\wedge\du^\alpha_I\wedge\dmxi+\prth{p^{Ii}_\alpha\du^\alpha_{I+1_i}+u^\alpha_{I+1_i}\diff p^{Ii}_\alpha-\pp[L]{u^\alpha_J}\du^\alpha_J}\wedge\dmx,
\end{eqnarray}
where $|I|=0,\dots,k-1$ and $|J|=0,\dots,k$.

We search for a  $\pi_{W,M}$-transverse and locally decomposable $m$-multivector field $X$ on $W$ (see Appendix \ref{sec:multivectors}) that is solution of the \emph{dynamical equation}:
\begin{equation} \label{eq:dynamical.equation}
i_X\Omega_\HH=0.
\end{equation}

\begin{remark} \label{rmk:dynamical.equation.projector}
The approach that we followed in \cite{CmpsLnMrt09} was to solve the equation
\begin{equation} \label{eq:dynamical.equation.projector}
i_\hh\Omega_\HH = (m-1)\Omega_\HH,
\end{equation}
where $\hh:TW\To TW$ is the horizontal projector associated to a horizontal distribution in $\pi_{W,M}:W\To M$ (\ie\ a connection). In fact, equations \eqref{eq:dynamical.equation} and \eqref{eq:dynamical.equation.projector} are equivalent. Two solutions $X$ and $\hh$ of the corresponding equation are related in such a way that $X$ generates the horizontal distribution to which $\hh$ is associated. If a solution $\hh$ of \eqref{eq:dynamical.equation.projector} is locally written in the following form (compare with the expression \eqref{eq:multivector} below)
\begin{equation} \label{eq:projector}
\hh=\dx^j\tensor\prth{\pp{x^j}+A^\alpha_{Jj}\pp{u^\alpha_J}+B^{Ii}_{\alpha j}\pp{p^{Ii}_{\alpha j}}+C_j\pp{p}},
\end{equation}
then its components will be governed by the relations \eqref{eq:holonomy}, \eqref{eq:dynamics.bottom}, \eqref{eq:dynamics.mid}, \eqref{eq:dynamics.top}, \eqref{eq:tangency.w1} and \eqref{eq:tangency.h0}.
\end{remark}

\begin{remark} \label{rmk:dynamical.equation.h}
Another formulation of the dynamical equation \eqref{eq:dynamical.equation} is
\begin{equation} \label{eq:dynamical.equation.h}
(-1)^mi_X\Omega = -\diff H,
\end{equation}
which is the regular expression for classical mechanics (\ie\ whenever $M$ has dimension $m=1$). The only difference is that, in this latter one, the tangency condition to $\HH=0$ is already included: $\Omega$ posses explicitly $\diff p$ in its local expression, while $\Omega_\HH$ does not. So \eqref{eq:dynamical.equation} may not determine any condition on the coefficients $C_j$ of $X_j$, which is equivalent to establish whether $X$ is tangent to $\HH=0$ or not. 
\end{remark}

It is shown that solutions $X$ of \eqref{eq:dynamical.equation} do not exist on the whole $W$, \cf\ \cite{CmpsLnMrt09,LnMrrMrt03,MrnMnzRmn03}. Because of that, we need to restrict the equation to the space where solutions do exist, that is to
\begin{equation} \label{eq:w1.def}
W_1 = \set{ w\in W\ :\ \exists X\in\Lambda^m_dT_wW\ \textrm{such that}\ \Lambda^mT_w\pi_{W,M}(X)\neq0\ \textrm{and}\ i_X\Omega_\HH(w)=0 }.
\end{equation}
Thus, a solution $X$ of \eqref{eq:dynamical.equation} rather than being a multivector field on $W$, will be a multivector field along $W_1$.

\begin{remark} \label{rmk:volume.form}
It is important to point out that the sole purpose of the base volume form $\eta$ is to ensure the existence of non-vanishing $m$-multivector fields on $M$, which implies the existence of $\pi_{W,M}$-transverse $m$-multivector fields on $W$.
\end{remark}

An arbitrary $\pi_{W,M}$-transverse multivector $X\in\Lambda^m_dT_wW$ may be written in the form
\begin{equation} \label{eq:multivector}
X = f\Lambda_{j=1}^mX_j = f\Lambda_{j=1}^m\prth{\pp{x^j}+A^\alpha_{Jj}\pp{u^\alpha_J}+B^{Ii}_{\alpha j}\pp{p^{Ii}_{\alpha j}}+C_j\pp{p}}.
\end{equation}
If we furthermore impose the condition $\eta(X)=1$ ($f=1$) and we compute the local expression of the equation \eqref{eq:dynamical.equation}, we then obtain
\begin{eqnarray}
\label{eq:holonomy}
A^\alpha_{Ii} &=& u^\alpha_{I+1_i}, \textrm{ with } |I|=0,\dots,k-1,\ i=1,\dots,\\
\label{eq:dynamics.bottom}
0  &=& \pp[L]{u^\alpha} - B^{\ j}_{\alpha j}; \\
\label{eq:dynamics.mid}
\sum_{I+1_i=J}p^{Ii}_\alpha &=& \pp[L]{u^\alpha_J} - B^{Jj}_{\alpha j}, \textrm{ with } |J|=1,\dots,k-1;\\
\label{eq:dynamics.top}
\sum_{I+1_i=K}p^{Ii}_\alpha &=& \pp[L]{u^\alpha_K}, \textrm{ with } |K|=k.
\end{eqnarray}

We shall refer to the equation \eqref{eq:holonomy} as the \emph{equation of holonomy} and to the equations \eqref{eq:dynamics.bottom}, \eqref{eq:dynamics.mid} and \eqref{eq:dynamics.top} as the \emph{equations of dynamics}. We will further named the equations of dynamics after the order of the multi-index in them, being the \emph{bottom}, \emph{mid} and \emph{top} level ones, respectively. We could have written these in the more compact way
\begin{equation} \label{eq:dynamics.concise}
\sum_{I+1_i=J}p^{Ii}_\alpha = \pp[L]{u^\alpha_J} - B^{Jj}_{\alpha j}, \textrm{ with } |J|=0,\dots,k,
\end{equation}
understanding that the first summation term is empty when $|J|=0$, as well as it is the last one when $|J|=k$ (there are no $B^{Ki}_{\alpha j}$ with $|K|=k$).

Notice that the top level equation of dynamics \eqref{eq:dynamics.top} is a constraint on the point $w\in W$ where the multivector $X$ stands, thus it defines the submanifold $W_1$ of $W$. The remaining equations, equation \eqref{eq:holonomy}, \eqref{eq:dynamics.bottom} and \eqref{eq:dynamics.mid}, are restrictions on the coefficients of the multivector $X$ . Note also that, for the time being, the $A$'s with greatest order multi-index and the $C$'s remain undetermined, as well as the most part of the $B$'s.

The existence of a $\pi_{W,M}$-transverse and locally decomposable $m$-multivector $X\in\Lambda^m_dT_wW$ (with $\eta(X)=1$) is guaranteed for every point $w\in W_1$. However, we cannot assure that such multivector $X$ is ``tangent'' to $W_1$ at $w$. Here, ``tangent'' means that the distribution associated to $X$ is tangent to $W_1$ at $w$. Therefore, we require $X$ to be tangent to $W_1$ by imposing the tangency condition
\[ X_j\prth{\sum_{I+1_i=K}p^{Ii}_\alpha - \pp[L]{u^\alpha_K}} = 0. \]
Furthermore, we also require $X$ to be tangent to the submanifold of $W$
\begin{equation} \label{eq:w0}
W_0 = \set{ w\in W\,:\,\HH(w)=0 } = \set{ w\in W\,:\,p = L-p^{Ii}_\alpha u^\alpha_{I+1_i}}
\end{equation}
by imposing the second tangency condition
\[ X_j\prth{p+p^{Ii}_\alpha u^\alpha_{I+1_i}-L} = 0. \]
These two previous conditions are equivalent to the following \emph{equations of tangency}
\begin{align}
\label{eq:tangency.w1}
\sum_{I+1_i=K}B^{Ii}_{\alpha j} =\,& \frac{\partial^2L}{\partial x^j\partial u^\alpha_K} + \sum_{|I|=0}^{k-1}u^\beta_{I+1_j}\frac{\partial^2L}{\partial u^\beta_I\partial u^\alpha_K} + \sum_{|R|=k}A^\beta_{Rj}\frac{\partial^2L}{\partial u^\beta_R\partial u^\alpha_K}, \textrm{ with } |K|=k;\\
\label{eq:tangency.h0}
C_j =\,& \pp[L]{x^j} + \sum_{|J|=0}^{k-1}u^\alpha_{J+1_j}\rbra\pp[L]{u^\alpha_I}-\!\!\sum_{I+1_i=J}p^{Ii}_\alpha\rket - \sum_{|I|=0}^{k-1}u^\alpha_{I+1_i}B^{Ii}_{\alpha j}.
\end{align}
Therefore, the coefficients of the multivector $X$ are governed by the equations of holonomy \eqref{eq:holonomy}, the bottom and mid level equations of dynamics \eqref{eq:dynamics.bottom} and \eqref{eq:dynamics.mid}, and the equations of tangency \eqref{eq:tangency.w1} and \eqref{eq:tangency.h0}.

Looking closer to the first equation of tangency \eqref{eq:tangency.w1}, we may observe that, if the matrix of second order partial derivatives of $L$ with respect to the ``velocities'' of highest order
\begin{equation}\label{eq:regular.lagrangian}
\prth{\frac{\partial^2L}{\partial u^\beta_R\partial u^\alpha_K}}_{|R|=|K|=k}
\end{equation}
is non-degenerate, then the highest order $A$'s are completely determined in terms of the highest order $B$'s. In the sequel, we will say that the Lagrangian $\LL:J^k\pi\To\Lambda^mM$ is \emph{regular} if, for any system of adapted coordinates, the matrix \eqref{eq:regular.lagrangian} is non-degenerate.

Note also that, thanks to the Lemma \ref{th:lower.sum} and the top level equation of dynamics \eqref{eq:dynamics.top}, the terms in the equation of tangency \eqref{eq:tangency.h0} that have $A$'s with a multi-index of length $k$ cancel out, while the terms that have $A$'s with a lower multi-index are already determined (equation of holonomy \eqref{eq:holonomy}). So, in some sense, the $C$'s depend only on the $B$'s.

\begin{theorem}
Consider the submanifold $W_2=W_0\cap W_1\stackrel{i}{\hookrightarrow}W$ and let $\Omega_2=i^*\Omega_\HH$. We have that $(W_2,\Omega_2)$ is multisymplectic if and only if $\LL$ is regular.
\end{theorem}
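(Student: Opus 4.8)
The plan is to separate the two halves of the multisymplectic condition. Closedness is free: since $\Omega=-\diff\Theta$ by \eqref{eq:canonical.form} and $\Omega_\HH=\Omega+\diff\HH$ by \eqref{eq:hamiltonian.form}, the form $\Omega_\HH$ is exact, hence closed, and so is $\Omega_2=i^*\Omega_\HH$; this uses nothing about $\LL$. All the content therefore lies in the pointwise nondegeneracy of $\Omega_2$, i.e.\ in showing that the map $V\mapsto i_V\Omega_2$ on $T_wW_2$ is injective exactly when the Hessian \eqref{eq:regular.lagrangian} is nonsingular. I would fix $w\in W_2$ and work in the coordinates of \eqref{eq:hamiltonian.form.coord}, writing $\Omega_\HH=-\diff p^{Ii}_\alpha\wedge\du^\alpha_I\wedge\dmxi+\varpi\wedge\dmx$, where $\varpi=p^{Ii}_\alpha\du^\alpha_{I+1_i}+u^\alpha_{I+1_i}\diff p^{Ii}_\alpha-\pp[L]{u^\alpha_J}\du^\alpha_J$ is the fibre one-form of \eqref{eq:hamiltonian.form.coord} ($|I|=0,\dots,k-1$), and decomposing every contraction $i_V\Omega_\HH$ by its number of fibre differentials (bidegrees $(m,0)$, $(m-1,1)$ and $(m-2,2)$ relative to the base/fibre splitting), since the $\diff^{m-p}x$ factors never mix.

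For the implication regular $\Rightarrow$ nondegenerate, I would first use regularity to produce good coordinates on $W_2$. The defining equation \eqref{eq:w0} of $W_0$ solves for $p$, and the top level constraint \eqref{eq:dynamics.top}, $\sum_{I+1_i=K}p^{Ii}_\alpha=\pp[L]{u^\alpha_K}$, is a Legendre-type relation in the highest jet variables whose Jacobian is precisely \eqref{eq:regular.lagrangian}; when that matrix is invertible the implicit function theorem lets me solve for the top jets $u^\alpha_K$ and conclude that $(x^i,u^\alpha_I,p^{Ii}_\alpha)$ with $|I|=0,\dots,k-1$ is a coordinate system on $W_2$. Pulling \eqref{eq:hamiltonian.form.coord} back gives $\Omega_2=-\diff p^{Ii}_\alpha\wedge\du^\alpha_I\wedge\dmxi+\vartheta\wedge\dmx$ with $\vartheta$ a fibre one-form in these coordinates. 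Contracting a general $V=V^j\partial_{x^j}+V^\alpha_I\partial_{u^\alpha_I}+V^{Ii}_\alpha\partial_{p^{Ii}_\alpha}$ then runs a clean cascade: the $(m-2,2)$ part forces $V^j=0$ (the $(m-2)$-forms $\diff p^{Ii}_\alpha\wedge\du^\alpha_I\wedge\diff^{m-2}x_{il}$ being independent for distinct $l$, for $m\geq2$; for $m=1$ this is ordinary mechanics, \cf\ Remark \ref{rmk:dynamical.equation.h}), the $(m-1,1)$ part then forces $V^\alpha_I=0$ and $V^{Ii}_\alpha=0$, and the inhomogeneous term $\vartheta\wedge\dmx$ lands only in the $(m,0)$ slot, where it vanishes automatically once the rest do. Hence $\ker\Omega_2=0$.

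For the converse I would argue by contraposition and exhibit an explicit null vector. If \eqref{eq:regular.lagrangian} is singular, pick $0\neq(c^\alpha_R)_{|R|=k}$ with $\pp[^2L]{u^\beta_R\,\partial u^\alpha_K}\,c^\beta_R=0$ for all $\alpha$ and all $|K|=k$, and set $V=c^\alpha_R\,\partial/\partial u^\alpha_R$. A direct check shows $V\in T_wW_2$: tangency to $W_0$ reduces, after using \eqref{eq:dynamics.top}, to the cancellation $\pp[L]{u^\alpha_R}c^\alpha_R=\pp[L]{u^\alpha_R}c^\alpha_R$, while tangency to $W_1$ is exactly the statement that $c$ lies in the kernel of \eqref{eq:regular.lagrangian}. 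Contracting, every term of $i_V\Omega_\HH$ vanishes: the $-\diff p^{Ii}_\alpha\wedge\du^\alpha_I\wedge\dmxi$ part because $V$ has no base, momentum or low-jet component, and the $\varpi\wedge\dmx$ part because, on $W_1$, $i_V\varpi=\sum_{|R|=k}\big(\sum_{I+1_i=R}p^{Ii}_\alpha-\pp[L]{u^\alpha_R}\big)c^\alpha_R=0$. Thus $i_V\Omega_\HH=0$, a fortiori $i_V\Omega_2=0$, and $\Omega_2$ is degenerate.

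The main obstacle is the regular half, specifically justifying the coordinate change on $W_2$ and controlling the inhomogeneous term $\vartheta\wedge\dmx$. The delicate point is that \eqref{eq:dynamics.top} only pins the symmetric combination $\sum_{I+1_i=K}p^{Ii}_\alpha$, so one must be careful that it is the highest jets $u^\alpha_K$, and not the individual highest momenta, that get eliminated; this is exactly where invertibility of \eqref{eq:regular.lagrangian} is indispensable, and the accompanying multi-index bookkeeping in the cascade is the only genuinely laborious part.
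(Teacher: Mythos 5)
Your proof is correct and follows essentially the same route as the paper: the paper defers this statement to \cite{CmpsLnMrt09} but proves the constrained analogue (Proposition \ref{th:multisymplectic.iff.cns-regular}) by exactly this computation --- contract $\Omega_\HH$ with a general tangent vector, impose the tangency relations to the submanifold, and observe that the kernel is parametrized by the kernel of the Hessian \eqref{eq:regular.lagrangian} acting on the top-jet components. Your repackaging (implicit-function coordinates plus a bidegree cascade for one direction, an explicit null vector for the other) is a cosmetic variation of the same coefficient-matching argument.
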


\begin{theorem} \label{th:euler-lagrange.geometric}
Let $\sigma\in\Sections\pi_{W,M}$ be an integral section of a solution $X$ of the dynamical equation \eqref{eq:dynamical.equation}. Then, its ``Lagrangian part'' $\sigma_1=\pr_1\!\circ\,\sigma$ is holonomic, $\sigma_1=j^k\phi$ for some section $\phi\in\Sections\pi$, which furthermore satisfies the higher-order Euler-Lagrange equations \eqref{eq:euler-lagrange}.
\end{theorem}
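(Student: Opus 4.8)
The plan is to transcribe the dynamical equation along the integral section into coordinates, using the component frame \eqref{eq:multivector} together with the equations of holonomy and dynamics already recorded. Write $\sigma$ in adapted coordinates as $x\mapsto(x^i,\hat u^\alpha_J(x),\hat p^{Ii}_\alpha(x),\hat p(x))$, and work with the normalised frame $X_j$ of \eqref{eq:multivector} with $\eta(X)=1$ for which \eqref{eq:holonomy}--\eqref{eq:dynamics.top} were derived. Since $\sigma$ is an integral section of $X=\Lambda_{j=1}^mX_j$ (so its image lies where $X$ is defined, namely in $W_1$), the defining relation $T\sigma\prth{\pp{x^j}}=X_j\circ\sigma$ unfolds into $\pp[\hat u^\alpha_J]{x^j}=A^\alpha_{Jj}\circ\sigma$, $\pp[\hat p^{Ii}_\alpha]{x^j}=B^{Ii}_{\alpha j}\circ\sigma$ and $\pp[\hat p]{x^j}=C_j\circ\sigma$. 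First I would feed in the equation of holonomy \eqref{eq:holonomy}, $A^\alpha_{Ii}=u^\alpha_{I+1_i}$ for $|I|\le k-1$, which yields $\pp[\hat u^\alpha_I]{x^i}=\hat u^\alpha_{I+1_i}$: each coordinate of order $\le k$ is the $x$-derivative of the one just below. An immediate induction on $|I|$ (the apparent overdetermination being resolved by commutativity of mixed partials) gives $\hat u^\alpha_J=\pp[^{|J|}\hat u^\alpha]{x^J}$ for all $|J|\le k$; hence, with $\phi:=\pi_{k,0}\circ\sigma_1$, one gets $\sigma_1=\pr_1\circ\sigma=j^k\phi$, which is the holonomy claim.

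Next I would exploit the equations of dynamics. Along $\sigma_1=j^k\phi$ the relation $\pp[\hat p^{Jj}_\alpha]{x^j}=B^{Jj}_{\alpha j}\circ\sigma$ rewrites the concise dynamics \eqref{eq:dynamics.concise} as
\[ P_J:=\!\!\!\sum_{I+1_i=J}\!\!\!\hat p^{Ii}_\alpha=\prth{\pp[L]{u^\alpha_J}}\circ j^k\phi-\sum_j\pp[\hat p^{Jj}_\alpha]{x^j},\qquad|J|=0,\dots,k, \]
with the usual conventions that the left sum is empty for $|J|=0$ and the last term absent for $|J|=k$; in particular the top level \eqref{eq:dynamics.top} reads $P_K=\prth{\pp[L]{u^\alpha_K}}\circ j^k\phi$ and the bottom level \eqref{eq:dynamics.bottom} reads $\prth{\pp[L]{u^\alpha}}\circ j^k\phi=\sum_j\pp[\hat p^j_\alpha]{x^j}$. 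The essential feature to keep in mind is that the dynamics pins down only the symmetric sums $P_J$, never the individual partial momenta $\hat p^{Ii}_\alpha$; the whole computation must therefore be arranged so that only the $P_J$ survive.

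The decisive step is to eliminate the momenta by a telescoping induction. Using \eqref{eq:total.and.partial.derivatives} to identify $x$-derivatives of functions pulled back by $j^k\phi$ with total derivatives evaluated along $j^{2k}\phi$, set $S_l:=\sum_{|J|=l}\dd[^{|J|}]{x^J}P_J$ for $1\le l\le k$. The combinatorial heart is the regrouping identity
\[ \sum_{|J|=l}\sum_j\dd[^{l+1}]{x^{J+1_j}}\hat p^{Jj}_\alpha=\sum_{|J'|=l+1}\dd[^{l+1}]{x^{J'}}P_{J'}, \]
valid because, for each $J'$ of length $l+1$, the pairs $(J,j)$ with $J+1_j=J'$ are exactly those summed in $P_{J'}$ while the total derivatives commute; this is the bookkeeping encoded in Lemma \ref{th:lower.sum}. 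Applying $\dd[^{|J|}]{x^J}$ to the level-$l$ dynamics and summing over $|J|=l$ gives $S_l+S_{l+1}=\sum_{|J|=l}\dd[^{|J|}]{x^J}\prth{\pp[L]{u^\alpha_J}}$ for $1\le l\le k-1$, together with $S_1=\prth{\pp[L]{u^\alpha}}$ from the bottom level and $S_k=\sum_{|J|=k}\dd[^{|J|}]{x^J}\prth{\pp[L]{u^\alpha_J}}$ from the top level. Substituting these into the alternating sum $\sum_{|J|=0}^k(-1)^{|J|}\dd[^{|J|}]{x^J}\prth{\pp[L]{u^\alpha_J}}$, the $S_l$ cancel in consecutive pairs by a standard telescoping and the total vanishes; by \eqref{eq:total.and.partial.derivatives} this is precisely the higher-order Euler--Lagrange equation \eqref{eq:euler-lagrange} along $j^{2k}\phi$. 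I expect the regrouping identity --- the verification that the Euler--Lagrange combination sees only the gauge-invariant sums $P_J$ and not the undetermined individual momenta --- to be the main obstacle; everything else is a direct transcription of \eqref{eq:holonomy}--\eqref{eq:dynamics.top}.
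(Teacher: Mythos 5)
Your proof is correct and follows the expected route (the paper itself defers this proof to \cite{CmpsLnMrt09}, where essentially the same argument appears): holonomy of $\sigma_1$ from the $A$-equations \eqref{eq:holonomy} via induction on $|J|$, then elimination of the individually undetermined momenta from the dynamics equations \eqref{eq:dynamics.bottom}--\eqref{eq:dynamics.top} by repeated differentiation and a telescoping cancellation of the symmetric sums $P_J$. One minor slip: the regrouping identity $\sum_{|J|=l}\sum_j\partial^{l+1}\hat p^{Jj}_\alpha/\partial x^{J+1_j}=\sum_{|J'|=l+1}\partial^{l+1}P_{J'}/\partial x^{J'}$ is exactly the unweighted Fubini-type Lemma \ref{th:fubini}, not Lemma \ref{th:lower.sum}.
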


\begin{theorem}
Consider the system of linear equations in $B's$ (of highest order) given by the mid level equation of dynamics \eqref{eq:dynamics.mid} and the tangency condition \eqref{eq:tangency.w1}. This system has always maximal rank but, it is overdetermined when $k=1$ or $m=1$, completely determined when $k=m=2$, and undetermined otherwise.
\end{theorem}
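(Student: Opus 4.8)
The plan is to reduce the statement to a comparison of the number of equations against the number of unknowns, followed by a rank computation for an explicit and highly structured linear system. First I would fix the fibre index $\alpha$: every equation in sight involves a single value of $\alpha$, so the full system is block diagonal in $\alpha$ and it suffices to treat one block, suppressing $\alpha$ thereafter. The unknowns are the highest order coefficients $B^{Ii}_{\alpha j}$, namely those with $|I|=k-1$; writing $N_\ell=\binom{m+\ell-1}{m-1}$ for the number of multi-indices of length $\ell$ in $m$ variables, there are $U=m^2N_{k-1}$ of them (the free ranges being $I$, $i$ and $j$). The equations constraining them are the tangency conditions \eqref{eq:tangency.w1}, one for each pair $(K,j)$ with $|K|=k$, giving $mN_k$ equations, together with the level $|J|=k-1$ instance of \eqref{eq:dynamics.mid}, one for each $J$ with $|J|=k-1$, giving $N_{k-1}$ equations; the lower levels of \eqref{eq:dynamics.mid} involve only $B$'s of strictly lower order and may be discarded. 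Hence $E=mN_k+N_{k-1}$.

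For the classification I would use the elementary identity $N_k=\frac{m+k-1}{k}\,N_{k-1}$, which yields
\[ E-U=mN_k-(m^2-1)N_{k-1}=\frac{N_{k-1}}{k}\bigl(k-(k-1)m(m-1)\bigr). \]
Since $N_{k-1}>0$, the sign of $E-U$ is that of $k-(k-1)m(m-1)$. This is positive exactly when $m=1$ or $k=1$ (overdetermined), vanishes exactly when $(k,m)=(2,2)$ (determined), and is negative for all remaining $k,m\geq2$ (undetermined); the last point is a short check (for $m\geq3$ one has $(k-1)m(m-1)\geq6(k-1)>k$ whenever $k\geq2$, while for $m=2$ the inequality reads $k>2$). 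This settles the three regimes.

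The substantive part is that the system has maximal rank, i.e.\ rank $\min(E,U)$, in every case. The structural observation I would exploit is that, for fixed $\alpha$, each tangency equation $(K,j)$ involves only coefficients carrying the same lower index $j$; hence the tangency block is itself block diagonal in $j$ into $m$ identical copies of the collapse map $T\colon(B^{Ii})_{|I|=k-1,i}\mapsto\bigl(\sum_{I+1_i=K}B^{Ii}\bigr)_{|K|=k}$. This map is surjective (split the prescribed value of $c_K$ evenly over the decompositions $K=I+1_i$), so its rank is $N_k$ and its nullity is $mN_{k-1}-N_k=\frac{(m-1)(k-1)}{k}N_{k-1}$; in particular $T$ is injective precisely when $(m-1)(k-1)=0$. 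Thus in the overdetermined regime ($m=1$ or $k=1$) the tangency equations alone already have full column rank $U$, and adjoining the $N_{k-1}$ rows of \eqref{eq:dynamics.mid} cannot lower the rank, so the rank is exactly $U=\min(E,U)$.

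In the remaining regime ($m,k\geq2$) I would instead prove full row rank $E$: since the tangency block has row rank $mN_k$, it suffices to show the $N_{k-1}$ rows $M_J$ of \eqref{eq:dynamics.mid} are independent modulo the tangency rows $T_{K,j}$. Positing a relation $\sum_{K,j}\mu_{K,j}\,T_{K,j}+\sum_J\nu_J\,M_J=0$ and reading off the coefficient of a single unknown $B^{Ii}_{j'}$ gives, for all $(I,i,j')$ with $|I|=k-1$, the relation $\mu_{I+1_i,\,j'}+\nu_I\,\delta_{ij'}=0$. Taking $i\neq j'$ forces $\mu_{K,j'}=0$ for every $K$ not concentrated at $j'$, and then taking $i=j'$ gives $\nu_I=-\mu_{I+1_{j'},j'}$; because $m\geq2$ one may always pick $j'$ off the (unique, since $k\geq2$) support index of a concentrated $I$, which drives every $\nu_I$, and then every $\mu_{K,j}$, to zero. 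Hence the rows are independent and the rank is $E=\min(E,U)$. I expect this final linear-independence argument to be the main obstacle: the counting and the sign analysis are routine, whereas isolating the clean coefficient relation above and verifying that it admits only the trivial solution—crucially using both $m\geq2$ and $k\geq2$—is where the genuine content lies.
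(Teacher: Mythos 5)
Your proof is correct. The paper states this theorem without proof (it is one of the results quoted from \cite{CmpsLnMrt09}), so there is no in-text argument to compare against; but your route — block-diagonality in $\alpha$, the count $E-U=\tfrac{N_{k-1}}{k}\bigl(k-(k-1)m(m-1)\bigr)$, full column rank of the tangency block exactly when $(m-1)(k-1)=0$ (each column of the collapse map $T$ hits exactly one row, so $T$ is onto with nullity $\tfrac{(m-1)(k-1)}{k}N_{k-1}$), and full row rank otherwise via the coefficient relation $\mu_{I+1_i,j'}+\nu_I\delta_{ij'}=0$ — is complete, and I verified each step, including the elimination of the concentrated multi-indices using both $m\geq2$ and $k\geq2$. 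One reading choice is worth flagging explicitly: for $k=1$ the range $|J|=1,\dots,k-1$ of \eqref{eq:dynamics.mid} is empty, so the ``level $|J|=k-1$'' equation you adjoin is really the bottom-level equation \eqref{eq:dynamics.bottom}, i.e.\ the $|J|=k-1$ instance of the concise form \eqref{eq:dynamics.concise}; without it the $k=1$ system would be square rather than overdetermined, so your interpretation is the only one consistent with the statement, but it deserves a sentence of justification since the theorem cites \eqref{eq:dynamics.mid} literally.
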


\section{Variational Calculus  with Constraints} \label{sec:variational.calculus.constraints}
We consider a constraint submanifold $i:\cns\hookrightarrow J^k\pi$ of codimension $l$, which is locally annihilated by $l$ functionally independent constraint functions $\Psi^\mu$, where $1\leq\mu\leq l$. The constraint submanifold $\cns$ is supposed to fiber over the whole of $M$.

\begin{remark} \label{rmk:constraint.reduction}
As our ultimate goal is to find holonomic jet sections that belong to $\cns$, one could look for a submanifold $\cns'$ of $\cns$ consisting of the image of such sections. The submanifold $\cns'$ is given by the constraint functions of $\cns$ plus their consequences up to order $k$, that is, $\Psi^\mu$, $\dd[\Psi^\mu]{x^i}$, $\dd[^2\Psi^\mu]{x^{ij}}$, etc. Geometrically, $\cns'$ is obtained as the output of the following recursive process:
\begin{equation} \label{eq:constraint.reduction}
\cns^{(s,r)} := \cbra\begin{array}{rll}
                                          \cns,& s=0,& r=k;\\
                       \pi_{k,0}(\cns^{s-1,k}),& s>0,& r=0;\\
J^1\cns^{(s,r-1)}\cap\pi_{k,r}(\cns^{(s-1,k)}),& s>0,& 0<r\leq k;
\end{array}\nket
\end{equation}
which stops when, for some step $s\geq0$, $\cns^{(s+1,k)}=\cns^{(s,k)}$. This algorithm is a generalization to jet bundles of the method given in \cite{MrmMndTlcz95} by Mendella \etal\ to extract the integral part of a differential equation in a tangent bundle. The reader is also refereed to the alternative approach by Gasqui \cite{Gsq85}.

For instance, if one considers the null divergence restriction $u_x+v_y=0$ in the 2nd-order jet manifold of $\pr_1:\RR^3\times\RR^2\To\RR^3$, then the resulting manifold $\cns^{(2,2)}=\cns^{(1,2)}$ is given by the restrictions $u_x+v_y=0$, $u_{xt}+v_{yt}=0$, $u_{xx}+v_{xy}=0$ and $u_{xy}+v_{yy}=0$ (see Example \ref{ex:navier-stokes}).
\end{remark}

We now look for extremals of the Lagrangian action \eqref{eq:integral.action} restricted to those sections $\phi\in\Sections\pi$ whose $k$-jet takes values in $\cns$ (see \cite{FrnGrcRdr04,MrsdPkrsSkll01}). We will use the Lagrange multiplier theorem that follows.

\begin{theorem}[Abraham, Marsden \& Ratiu \cite{AbrhMrsdRat88}] \label{th:lagrange.multipliers}
Let $\MM$ be a smooth manifold, $f:\MM\To\RR$ be $\CC^r$, $r\geq1$, $\FF$ a Banach space, $g:\MM\To\FF$ a smooth submersion and $\NN=g^{-1}(0)$. A point $\phi\in N$ is a critical point of $f|_N$ if and only if there exists $\lambda\in\FF^*$, called a \emph{Lagrange multiplier}, such that $\phi$ is a critical point of $f-\abra\lambda,g\aket$.
\end{theorem}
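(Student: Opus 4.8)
The plan is to reduce the stated equivalence to a purely linear statement at the point $\phi$, and then to settle the non-trivial direction by a bounded-factorization argument. First I would record that, $g$ being a submersion, $\NN=g^{-1}(0)$ is a split submanifold of $\MM$ with $T_\phi\NN=\ker T_\phi g$ at every $\phi\in\NN$. Abbreviating $E:=T_\phi\MM$, $A:=T_\phi g:E\To\FF$ and $\alpha:=T_\phi f\in E^*$ (with the identifications $T_0\FF\cong\FF$ and $T_{f(\phi)}\RR\cong\RR$), the condition that $\phi$ be critical for $f|_\NN$ becomes $\alpha|_{\ker A}=0$, while the condition that $\phi$ be critical for $f-\abra\lambda,g\aket$ becomes $\alpha=\lambda\circ A$, since the differential at $\phi$ of $x\mapsto\abra\lambda,g(x)\aket$ is $\lambda\circ A$. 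The theorem thus reduces to the claim that $\alpha$ annihilates $\ker A$ if and only if $\alpha=\lambda\circ A$ for some $\lambda\in\FF^*$; here the implication from right to left is immediate, since $\alpha=\lambda\circ A$ forces $\alpha(v)=\lambda(Av)=0$ for every $v\in\ker A$.

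For the converse I would exploit the splitting built into the notion of submersion. Writing $E=\ker A\oplus C$ with $C$ a closed complement, the restriction $A|_C:C\To\FF$ is a continuous linear bijection between Banach spaces, so by the open mapping theorem its inverse $B:\FF\To C\subseteq E$ is bounded. I would then set $\lambda:=\alpha\circ B$, which belongs to $\FF^*$ as a composition of bounded maps, and verify $\lambda\circ A=\alpha$: decomposing $v=w+c$ with $w\in\ker A$ and $c\in C$ gives $Av=Ac$, hence $B(Av)=c$ and $\lambda(Av)=\alpha(c)=\alpha(w)+\alpha(c)=\alpha(v)$, using $\alpha|_{\ker A}=0$. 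Undoing the identifications then yields $T_\phi\bigl(f-\abra\lambda,g\aket\bigr)=0$, so $\phi$ is indeed a critical point of $f-\abra\lambda,g\aket$.

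The step I expect to be the main obstacle is the continuity of the multiplier. The set-theoretic factorization $\alpha=\lambda\circ A$ is forced the moment $\alpha$ kills $\ker A$, but ensuring that the resulting $\lambda$ is a genuine element of the topological dual $\FF^*$ is precisely where one must invoke both the complementability of $\ker A$ --- part of the very definition of a submersion in the Banach category --- and the open mapping theorem. In finite dimensions this difficulty is vacuous, which is why it leaves no trace in the elementary Lagrange multiplier rule; in the present Banach setting it is the crux of the argument.
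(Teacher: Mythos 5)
The paper states this theorem as an imported result from Abraham, Marsden and Ratiu and gives no proof of its own, so there is nothing internal to compare against; judged on its merits, your argument is correct and is essentially the standard proof from that reference. You correctly reduce the statement to the linear factorization problem $\alpha|_{\ker A}=0\iff\alpha=\lambda\circ A$, and you identify the genuinely Banach-theoretic content in the right place: the splitting $E=\ker A\oplus C$ (available because a submersion in this category has split kernel by definition) together with the open mapping theorem applied to the bijection $A|_C:C\To\FF$, which is what guarantees that the multiplier $\lambda=\alpha\circ B$ lands in the topological dual $\FF^*$ rather than merely in the algebraic dual.
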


In order to apply the Lagrange multiplier theorem, we need to define constraints as the 0-level set of some function $g$. We configure therefore the following setting: choose the smooth manifold $\MM$ to be the space of local sections $\Sections_R\pi=\set{\phi:R\subset M\to E:\pi\circ\phi=\Id_M}$, for some compact region $R\subset M$. The Banach space $\FF$ is the set of smooth functions $\smooth(R,\RR^l)$, provided with the $L^2$-norm, for instance. The constraint function $\Psi$ induces a constraint function on the space of local sections $\Sections_R\pi$ by mapping each section $\phi$ to the evaluation of its $k$-lift by the constraint, that is,
\[ g:\phi\in\Sections_R\pi\mapsto\Psi\circ j^k\phi\in\smooth(R,\RR^l). \]
Note that the 0-level set $\NN=g^{-1}(0)$ is the set of sections whose $k$-lift takes values in the constraint manifold $\cns$ (over $R$).

We therefore obtain that a section $\phi:M\To E$ is a critical point of the integral action $\Lact$ restricted to $\cns$ if and only if there exists a Lagrange multiplier $\lambda\in(\smooth(R,\RR^l))^*$ such that $\phi$ is a critical point of $\Lact-\abra\lambda,g\aket$. A priori, we cannot assure that the pairing $\abra\lambda,g(\phi)\aket$ has an integral expression of the type $\int_R\lambda_\mu\Psi^\mu\circ j^k\phi\dmx$ for some functions $\lambda_\mu:R\To\RR$. Henceforth, we shall suppose that that is the case.

\begin{remark} \label{rmk:abnormality}
In Theorem \ref{th:lagrange.multipliers} appears some regularity conditions that exclude the so-called \emph{abnormal solutions}. In general, given a critical point $\phi\in {\mathcal N}=g^{-1}(0)$  of $f_{|{\mathcal N}}$ , the classical Lagrange multiplier theorem claims that there exists a nonzero element  $(\lambda_0, \lambda)\in\RR\times\FF^*$ such that $\phi$ is a critical point of
\begin{equation}\label{eq:abnormality}
\lambda_0f - \abra\lambda,g\aket.
\end{equation}
Under the submersivity condition on  $g$, that is $\phi$ is a \emph{regular critical point},  it is possible to guarantee that $\lambda_0\not=0$ and dividing by $\lambda_0$ in \eqref{eq:abnormality} we obtain the characterization of critical points given in Theorem \ref{th:lagrange.multipliers}. The critical points $\phi$ with vanishing Lagrange multiplier, that is, $\lambda_0=0$ are called \emph{abnormal critical points}.

In the sequel we will only study the regular critical points, but our developments are easily adapted for the case of abnormality (adding the Lagrange multiplier $\lambda_0$ and studying separately both cases, $\lambda_0=0$ and $\lambda_0=1$).
\end{remark}

\begin{proposition}[Constrained higher-order Euler-Lagrange equations] \label{th:cns.euler-lagrange}
Let $\phi\in\Sections\pi$ be a critical point of the Lagrangian action $\Lact$ given in \eqref{eq:integral.action} restricted to those sections of $\pi$ whose $k$th lift take values in the constraint submanifold $\cns\subset J^k\pi$. If the associated Lagrange multiplier $\lambda$ is regular enough, then there must exist $l$ smooth functions $\lambda_\mu:R\subset M\To\RR$ that satisfy together with $\phi$ the \emph{constrained higher-order Euler-Lagrange equations}
\begin{equation} \label{eq:cns.euler-lagrange}
(j^{2k}\phi)^*\prth{ \sum_{|J|=0}^k(-1)^{|J|}\dd[^{|J|}]{x^J}\prth{\pp[L]{u^\alpha_J}-\lambda_\mu\pp[\Psi^\mu]{u^\alpha_J}} } = 0.
\end{equation}
\end{proposition}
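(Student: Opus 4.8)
The plan is to convert the constrained variational problem into an unconstrained one by means of Theorem \ref{th:lagrange.multipliers} and then to read off the equations from Theorem \ref{th:euler-lagrange}. First I would instantiate the Lagrange multiplier theorem with $\MM=\Sections_R\pi$, $f=\Lact$, the Banach space $\FF=\smooth(R,\RR^l)$, and the submersion $g:\phi\mapsto\Psi\circ j^k\phi$, exactly as set up before the statement. A tangent vector to $\MM$ at $\phi$ is a vertical vector field along $\phi$, so critical points in the sense of Theorem \ref{th:lagrange.multipliers} coincide with those of Definition \ref{def:critical.point}. Since $\phi$ is a regular critical point of $\Lact$ restricted to $\NN=g^{-1}(0)$, the theorem provides a multiplier $\lambda\in\FF^*$ for which $\phi$ is an \emph{unconstrained} critical point of $\Lact-\abra\lambda,g\aket$.

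Next, invoking the standing integrability hypothesis on $\lambda$ (the assumption flagged just before Remark \ref{rmk:abnormality}, rephrased in the statement as the requirement that $\lambda$ be ``regular enough''), I would represent the pairing by a smooth density, $\abra\lambda,g(\phi)\aket=\int_R\lambda_\mu(\Psi^\mu\circ j^k\phi)\dmx$ with $\lambda_\mu:R\To\RR$ smooth. Then
\[
(\Lact-\abra\lambda,g\aket)(\phi)=\int_R(L-\lambda_\mu\Psi^\mu)(j^k\phi)\dmx,
\]
so $\phi$ is an unconstrained critical point of the action attached to the modified Lagrangian $\tilde L:=L-\lambda_\mu\Psi^\mu$. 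The point to stress is that, although the $\lambda_\mu$ vary over the base, $\tilde L$ is still a legitimate function on $J^k\pi$ (restricted over $R$): its explicit dependence on the base coordinates is merely augmented by the factors $\lambda_\mu(x)$, and nothing in the proof of Theorem \ref{th:euler-lagrange} forbids such dependence.

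It then remains to apply Theorem \ref{th:euler-lagrange} to $\tilde L$. Because each $\lambda_\mu$ is independent of the jet coordinates $u^\alpha_J$, the fiber derivatives separate,
\[
\pp[\tilde L]{u^\alpha_J}=\pp[L]{u^\alpha_J}-\lambda_\mu\pp[\Psi^\mu]{u^\alpha_J},
\]
and substituting this into the interior higher-order Euler-Lagrange equations \eqref{eq:euler-lagrange} of Theorem \ref{th:euler-lagrange} gives exactly \eqref{eq:cns.euler-lagrange}. The total derivatives $\dd[^{|J|}]{x^J}$ then also act on the $\lambda_\mu(x)$, which is why the multipliers sit \emph{inside} the total derivative in the final equation; this is automatic and requires no separate computation. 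Restricting, as in Remark \ref{rmk:null.boundary}, to variations supported in the interior of $M$ discards the accompanying boundary term.

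The main obstacle is not this last computation but the second step: the abstract theorem only furnishes $\lambda$ as a continuous linear functional on $\FF$, with no guarantee a priori that it is represented by a smooth density $\lambda_\mu\dmx$. Securing smooth functions $\lambda_\mu$ is precisely the content of the regularity hypothesis, and one must additionally remain in the normal (non-abnormal) regime so that the multiplier attached to $\Lact$ itself can be normalised to one, as discussed in Remark \ref{rmk:abnormality}.
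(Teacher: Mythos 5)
Your proposal is correct and follows essentially the same route the paper intends: the paper proves this proposition implicitly through the discussion preceding it (instantiating Theorem \ref{th:lagrange.multipliers} on $\Sections_R\pi$ with $g(\phi)=\Psi\circ j^k\phi$, assuming the multiplier is represented by an integral density $\lambda_\mu\dmx$, and then applying the unconstrained Theorem \ref{th:euler-lagrange} to the modified Lagrangian $L-\lambda_\mu\Psi^\mu$). Your observations that the $\lambda_\mu$ must end up inside the total derivatives and that the representability of $\lambda$ by smooth densities is exactly what the ``regular enough'' hypothesis covers are both accurate and consistent with the paper's treatment.
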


\section{Constrained Mechanics in Higher Order Field Theories} \label{sec:constrained.mechanics}
As in the previous section, we begin by considering a constraint submanifold $i:\cns\hookrightarrow J^k\pi$ of codimension $l$, which is locally annihilated by $l$ functionally independent constraint functions $\Psi^\mu$, where $1\leq\mu\leq l$. The constraint submanifold $\cns$ is supposed to fiber over the whole of $M$ and it is not necessarily generated from a previous constraint submanifold by the process shown in Remark \ref{rmk:constraint.reduction}. We define in the restricted velocity-momentum space $W_0=\set{w\in W:\HH(w)=0}$ the constrained velocity-momentum space $W_0^\cns=\pr_1^{-1}(\cns)$, which is a submanifold of $W_0$, whose induced embedding and whose constraint functions will still be denoted $i:W_0^\cns\hookrightarrow W$ and $\Psi^\mu$, where $1\leq\mu\leq l$. The first order case $k=1$ is treated in \cite{CntrVnkr07}.

The following proposition allows us to work in local coordinates on the unconstrained velocity-momentum space $W$.

\begin{proposition}
Given a point $w\in W_0^\cns$, let $X\in\Lambda^m_d(T_wW_0^\cns)$ be a decomposable multivector and denote its image, $i_*(X)\in\Lambda^m(T_wW)$, by $\bar X$. The following statements are equivalent:
\begin{enumerate}
\item $i_X\Omega_0^\cns(Y)=0$ for every $Y\in T_wW_0^\cns$;
\item $i_{\bar X}\Omega\in T^0_wW_0^\cns$;
\end{enumerate}
where $T^0_wW_0^\cns$ is the annihilator of $i_*(T_wW_0^\cns)$ in $T_wW$.
\end{proposition}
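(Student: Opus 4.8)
The plan is to reduce the statement to the naturality of the pullback under the embedding $i\colon W_0^\cns\hookrightarrow W$, once the difference between $\Omega_\HH$ and $\Omega$ has been disposed of. Following the convention used for $\Omega_2=i^*\Omega_\HH$ in Section~\ref{sec:sr.formalism}, I read $\Omega_0^\cns$ as the pullback $i^*\Omega_\HH$. The first observation I would make is that along $W_0^\cns$ this pullback coincides with $i^*\Omega$: since $W_0^\cns\subseteq W_0=\set{\HH=0}$, the function $\HH$ vanishes identically on the embedded submanifold, so $i^*\HH=0$ and hence, by naturality of the exterior derivative, $i^*\diff\HH=\diff(i^*\HH)=0$. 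Therefore
\[ \Omega_0^\cns=i^*\Omega_\HH=i^*\Omega+i^*\diff\HH=i^*\Omega. \]
This is the step that makes statement (1), phrased with $\Omega_\HH$, compatible with statement (2), phrased with the ambient $\Omega$.

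With this in hand, the argument is purely pointwise linear algebra at $w$. Write a decomposable $X=X_1\wedge\dots\wedge X_m$ with $X_1,\dots,X_m\in T_wW_0^\cns$; its image $\bar X=i_*X=i_*X_1\wedge\dots\wedge i_*X_m$ is again decomposable, since $i_*$ is linear. For any $Y\in T_wW_0^\cns$ the defining property of the pullback gives
\[ (i_X\Omega_0^\cns)(Y)=\Omega_0^\cns(X_1,\dots,X_m,Y)=(i^*\Omega)(X_1,\dots,X_m,Y)=\Omega(i_*X_1,\dots,i_*X_m,i_*Y)=(i_{\bar X}\Omega)(i_*Y). \]
Thus the $1$-form $i_X\Omega_0^\cns$ on $W_0^\cns$ and the $1$-form $i_{\bar X}\Omega$ on $W$ are related at $w$ by $i_X\Omega_0^\cns=i^*(i_{\bar X}\Omega)$.

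Finally I would read off the equivalence from this identity. Statement (1) asserts $(i_X\Omega_0^\cns)(Y)=0$ for every $Y\in T_wW_0^\cns$, which by the displayed identity is equivalent to $(i_{\bar X}\Omega)(i_*Y)=0$ for every such $Y$, i.e.\ to $i_{\bar X}\Omega$ annihilating the subspace $i_*(T_wW_0^\cns)$ of $T_wW$. By the very definition of $T^0_wW_0^\cns$ as that annihilator, this is precisely statement (2), so the two conditions are equivalent.

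The computation is short, so the only genuine point requiring care — the ``main obstacle'', such as it is — is the bookkeeping of conventions: verifying that $\Omega_0^\cns$ is indeed $i^*\Omega_\HH$ and that the $\diff\HH$ term drops out under pullback, so that one may freely pass between $\Omega_\HH$ and $\Omega$, together with fixing a single sign convention for the interior product $i_X$ of an $m$-multivector with the $(m+1)$-form so that all four contractions above are consistent. No transversality or regularity hypothesis on $\LL$ or on $\cns$ is needed, since the statement is a pointwise identity about pullbacks of forms along an embedding.
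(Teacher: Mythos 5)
Your argument is correct. The paper in fact states this proposition without any proof, so there is nothing to compare against; your write-up supplies the missing argument in the natural way, and the two points you flag as needing care are exactly the right ones: identifying $\Omega_0^\cns$ with $i^*\Omega_\HH$ and observing that $i^*\diff\HH=\diff(i^*\HH)=0$ because $W_0^\cns\subseteq W_0=\set{\HH=0}$, after which the equivalence is just the tautology $i_X(i^*\Omega)=i^*(i_{\bar X}\Omega)$ together with the definition of the annihilator $T^0_wW_0^\cns$.
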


We therefore look for solutions of the \emph{constrained dynamical equation}
\begin{equation} \label{eq:dynamical.equation.aux}
(-1)^mi_{\bar X}\Omega = -\lambda_\mu\diff\Psi^\mu-\lambda\diff H,
\end{equation}
where $\bar X$ is a tangent multivector field along $W^\cns_0$, the $\lambda^\mu$'s and $\lambda$ are Lagrange multipliers to be determined. Here, the coefficient $(-1)^m$ is used for technical purposes.

\begin{remark} \label{rmk:lagrange.multipliers}
It should be said that the Lagrange multipliers that appear in the dynamical equation  \eqref{eq:dynamical.equation.aux} have a different nature that the ones that appear in Proposition \ref{th:cns.euler-lagrange}. The former are locally defined on $W$, while the latter are locally defined on $M$. Although they coincide on the integral sections $\sigma\in\Sections\pi_{W,M}$ of a solution $X$ of the dynamical equation \eqref{eq:dynamical.equation.aux}, since its ``Lagrangian part'' $\sigma_1=\pr_1\circ\,\sigma$ satisfies the constrained Euler-Lagrange equation \eqref{eq:cns.euler-lagrange} with $\tilde\lambda_\mu=\lambda_\mu\circ\sigma$ (\cf\ Proposition \ref{th:cns.euler-lagrange.geometric}).
\end{remark}

Let $\bar X\in\Lambda^m(T_wW)$ be a decomposable $m$-vector at a given point $w\in W$, that is,  $\bar X=\bar X_1\wedge\dots\wedge\bar X_m$ for $m$ tangent vectors $\bar X_i\in T_wW$, which have the form
\begin{equation} \label{eq:tangent.vector}
\bar X_j = \pp{x^j}+A^\alpha_{Jj}\pp{u^\alpha_J}+B^{Ii}_{\alpha j}\pp{p^{Ii}_\alpha}+C_j\pp{p}
\end{equation}
in a given adapted chart $(x^i,u^\alpha_J,p^{Ii}_\alpha,p)$. A straightforward computation gives us
\begin{equation}
(-1)^mi_{\bar X}(\diff p^{Ii}_\alpha\wedge\du^\alpha_I\wedge\dmxi) = (A^\alpha_{Ii}B^{Ij}_{\alpha j}-A^\alpha_{Ij}B^{Ij}_{\alpha i})\dx^i+A^\alpha_{Ii}\diff p^{Ii}_\alpha-B^{Ii}_{\alpha i}\du^\alpha_I
\end{equation}
and
\begin{equation}
(-1)^mi_{\bar X}(\diff p\wedge\dmx) = \diff p - C_i\dx^i.
\end{equation}
Applying the above equations to the dynamical one \eqref{eq:dynamical.equation.aux}, we obtain the relations
\[\begin{array}{lrcl}
\textrm{coefficients in }\diff p:
& 1 &=& \lambda;\\
\textrm{coefficients in }\diff p^{Ii}_\alpha:
& A^\alpha_{Ii} &=& \lambda u^\alpha_{I+1_i};\\
\textrm{coefficients in }\du^\alpha_J
& B^{\ i}_{\alpha i} &=& \lambda\pp[L]{u^\alpha}-\lambda_\mu\pp[\Psi^\mu]{u^\alpha};\\
&  B^{Ii}_{\alpha i} &=& \lambda\prth{\pp[L]{u^\alpha_I}-\sum_{J+1_j=I}p^{Jj}_\alpha}-\lambda_\mu\pp[\Psi^\mu]{u^\alpha_I};\\
&                  0 &=& \lambda\prth{\pp[L]{u^\alpha_K}-\sum_{J+1_j=K}p^{Jj}_\alpha}-\lambda_\mu\pp[\Psi^\mu]{u^\alpha_K};\\
\textrm{coefficients in }\dx^j:
& A^\alpha_{Ii}B^{Ii}_{\alpha j}-A^\alpha_{Ij}B^{Ii}_{\alpha i}+C_j &=& \lambda\pp[L]{x^j}-\lambda_\mu\pp[\Psi^\mu]{x^j}.
\end{array}\]
Thus, a decomposable $m$-vector $\bar X\in\Lambda^m(T_wW)$ at a point $w\in W$ is a solution of the dynamical equation
\begin{equation} \label{eq:cns.dynamical.equation}
(-1)^mi_{\bar X}\Omega = -\lambda_\mu\diff\Psi^\mu-\diff H,
\end{equation}
if for any adapted chart $(x^i,u^\alpha_J,p^{Ii}_\alpha,p)$, the coefficients of $\bar X$ and the point $w$ satisfy the equations
\begin{align}
\label{eq:cns.holonomy.lag}
A^\alpha_{Ii} =\,& u^\alpha_{I+1_i}, \textrm{ with } |I|=0,\dots,k-1,\ i=1,\dots;\\
\label{eq:cns.dynamics.lag.bottom}
                          0 =\,& \pp[L]{u^\alpha}-\lambda_\mu\pp[\Psi^\mu]{u^\alpha}-B^{\ j}_{\alpha j};\\
\label{eq:cns.dynamics.lag.mid}
\sum_{I+1_i=J}p^{Ii}_\alpha =\,& \pp[L]{u^\alpha_J}-\lambda_\mu\pp[\Psi^\mu]{u^\alpha_J}-B^{Jj}_{\alpha j}, \textrm{ with } |J|=1,\dots,k-1;\\
\label{eq:cns.dynamics.lag.top}
\sum_{I+1_i=K}p^{Ii}_\alpha =\,& \pp[L]{u^\alpha_K}-\lambda_\mu\pp[\Psi^\mu]{u^\alpha_K}, \textrm{ with } |K|=k;\\
\label{eq:cns.tangency.lag.h}
C_j =\,& \pp[L]{x^j}-\lambda_\mu\pp[\Psi^\mu]{x^j} + \sum_{|J|=0}^{k-1}u^\alpha_{J+1_j}\rbra\pp[L]{u^\alpha_I}-\lambda_\mu\pp[\Psi^\mu]{u^\alpha_J}-\!\!\sum_{I+1_i=J}p^{Ii}_\alpha\rket - u^\alpha_{I+1_i}B^{Ii}_{\alpha j}.
\end{align}

Because of the Lagrange multipliers $\lambda_\mu$, we cannot describe the submanifold of $W^\cns_0$ where solutions $X$ of the constrained dynamical equation \eqref{eq:cns.dynamical.equation} exist, like it has been done in \eqref{eq:dynamics.top} for the unconstrained dynamical equation \eqref{eq:dynamical.equation}. Therefore, we need to get rid off of them. Consider the more concise expression for the equations of dynamics \eqref{eq:cns.dynamics.lag.bottom}, \eqref{eq:cns.dynamics.lag.mid} and \eqref{eq:cns.dynamics.lag.top}
\begin{equation}
\label{eq:cns.dynamics.lag.concise}
\sum_{I+1_i=J}p^{Ii}_\alpha = \pp[L]{u^\alpha_J}-\lambda_\mu\pp[\Psi^\mu]{u^\alpha_J}-B^{Jj}_{\alpha j}, \textrm{ with } |J|=0,\dots,k,
\end{equation}
where, as in \eqref{eq:dynamics.concise}, the first summation term is understood to be void when $|J|=0$, as well as it is the last one when $|J|=k$. We now suppose that the constraints $\Psi^\mu$ are of the type $u^{\hat\alpha}_{\hat J}=\Phi^{\hat\alpha}_{\hat J}(x^i,u^{\check\alpha}_{\check J})$, where $u^{\hat\alpha}_{\hat J}$ are some constrained coordinates which depend on the free coordinates $(x^i,u^{\check\alpha}_{\check J})$ through the functions $\Phi^{\hat\alpha}_{\hat J}$. Thus, the constraint have the form $\Psi^{\hat\alpha}_{\hat J}(x^i,u^\alpha_J)=u^{\hat\alpha}_{\hat J}-\Phi^{\hat\alpha}_{\hat J}(x^i,u^{\check\alpha}_{\check J})=0$. So, writing again the previous equation \eqref{eq:cns.dynamics.lag.concise} for the different sets of coordinates, the ones that are free and the ones that are not, we obtain
\begin{eqnarray}
\label{eq:cns.dynamics.lag.concise.constrained}
\sum_{I+1_i=\hat J}p^{Ii}_{\hat\alpha} &=& \pp[L]{u^{\hat\alpha}_{\hat J}}-\lambda^{\hat J}_{\hat\alpha}\phantom{\pp[\Phi_J]{u_J}}-B^{\hat Jj}_{\hat\alpha j}, \textrm{ with } |\hat J|=0,\dots,k;\\
\label{eq:cns.dynamics.lag.concise.free}
\sum_{I+1_i=\check J}p^{Ii}_{\check\alpha} &=& \pp[L]{u^{\check\alpha}_{\check J}}-\lambda^{\hat J}_{\hat\alpha}\pp[\Phi^{\hat\alpha}_{\hat J}]{u^{\check\alpha}_{\check J}}-B^{\check Jj}_{\check\alpha j}, \textrm{ with } |\check J|=0,\dots,k.
\end{eqnarray}
Substituting $-\lambda^{\hat J}_{\hat\alpha}$ from \eqref{eq:cns.dynamics.lag.concise.constrained} into \eqref{eq:cns.dynamics.lag.concise.free}, we have that
\begin{equation} \label{eq:cns.dynamics.concise}
\sum_{I+1_i=\check J}p^{Ii}_{\check\alpha}+\prth{\sum_{I+1_i=\hat J}p^{Ii}_{\hat\alpha}}\pp[\Phi^{\hat\alpha}_{\hat J}]{u^{\check\alpha}_{\check J}} = \pp[L]{u^{\check\alpha}_{\check J}}+\pp[L]{u^{\hat\alpha}_{\hat J}}\pp[\Phi^{\hat\alpha}_{\hat J}]{u^{\check\alpha}_{\check J}}-B^{\check Jj}_{\check\alpha j}-B^{\hat Jj}_{\hat\alpha j}\pp[\Phi^{\hat\alpha}_{\hat J}]{u^{\check\alpha}_{\check J}}, \textrm{ with } |\check J|=0,\dots,k.
\end{equation}
Note that, when $|\check J|=k$, the term $B^{\check Jj}_{\check\alpha j}$ disappears, but $B^{\hat Jj}_{\hat\alpha j}\pp[\Phi^{\hat\alpha}_{\hat J}]{u^{\check\alpha}_{\check J}}$ do not necessarily. This is circumvent by supposing that, if $|\hat J|<k$, then $\pp[\Phi^{\hat\alpha}_{\hat J}]{u^\alpha_K}=0$ for any $|K|=k$. That is the case when the constraint submanifold $\cns$ has no constraint of higher order, \ie\ $\cns=\pi^{-1}_{k,k-1}(\pi_{k,k-1}(\cns))$, or, more generally, when $\cns$ fibers by $\pi_{k,k-1}$ over its image.

Taking this into account, we expand the previous equation \eqref{eq:cns.dynamics.concise}, obtaining then constrained equations of dynamics freed of the Lagrange multipliers
\begin{eqnarray}
\label{eq:cns.dynamics.bottom}
\sum_{I+1_i=\hat J}p^{Ii}_{\hat\alpha}\pp[\Phi^{\hat\alpha}_{\hat J}]{u^{\check\alpha}} &\!\!=\!\!& \pp[L^\cns]{u^{\check\alpha}}-B^{\ j}_{\check\alpha j}-B^{\hat Jj}_{\hat\alpha j}\pp[\Phi^{\hat\alpha}_{\hat J}]{u^{\check\alpha}};\\
\label{eq:cns.dynamics.mid}
\sum_{I+1_i=\check J}p^{Ii}_{\check\alpha}+\!\!\!\sum_{I+1_i=\hat J}p^{Ii}_{\hat\alpha}\pp[\Phi^{\hat\alpha}_{\hat J}]{u^{\check\alpha}_{\check J}} &\!\!=\!\!& \pp[L^\cns]{u^{\check\alpha}_{\check J}}-B^{\check Jj}_{\check\alpha j}-B^{\hat Jj}_{\hat\alpha j}\pp[\Phi^{\hat\alpha}_{\hat J}]{u^{\check\alpha}_{\check J}}, \textrm{ with } |\check J|=1,\dots,k-1;\\
\label{eq:cns.dynamics.top}
\sum_{I+1_i=\check K}\!p^{Ii}_{\check\alpha}+\!\!\!\sum_{I+1_i=\hat K}\!p^{Ii}_{\hat\alpha}\pp[\Phi^{\hat\alpha}_{\hat K}]{u^{\check\alpha}_{\check K}} &\!\!=\!\!& \pp[L^\cns]{u^{\check\alpha}_{\check K}}, \textrm{ with } |\check K|=k;
\end{eqnarray}
where $\pp[L^\cns]{u^{\check\alpha}_{\check J}}=\pp[L]{u^{\check\alpha}_{\check J}}+\pp[L]{u^{\hat\alpha}_{\hat J}}\pp[\Phi^{\hat\alpha}_{\hat J}]{u^{\check\alpha}_{\check J}}$, being $\LL^\cns=\LL\circ i:\cns\To\Lambda^mM$ the restricted Lagrangian.

We are now in disposition to define the submanifold $W^\cns_2$ along to which solutions of the constrained dynamical equation \eqref{eq:cns.dynamical.equation} exist,
\begin{equation} \label{eq:wc2}
W^\cns_2 = \set{ w\in W^\cns_0\ :\ \eqref{eq:cns.dynamics.top} }
         = \set{ w\in W\ :\
\begin{array}{c}
u^{\hat\alpha}_{\hat J}=\Phi^{\hat\alpha}_{\hat J}(x^i,u^{\check\alpha}_{\check J})\\
\displaystyle \phantom{\pp{u_J}} p=L(x^i,u^\alpha_J)-p^{Ii}_\alpha u^\alpha_{I+1_i} \phantom{\pp{u_J}}\\
\displaystyle
\sum_{I+1_i=\check K}\!p^{Ii}_{\check\alpha}+\!\!\!\sum_{I+1_i=\hat K}\!p^{Ii}_{\hat\alpha}\pp[\Phi^{\hat\alpha}_{\hat K}]{u^{\check\alpha}_{\check K}} = \pp[L^\cns]{u^{\check\alpha}_{\check K}}
\end{array} }
\end{equation}

Tangency conditions on $X$ with respect to $W^\cns_2$ will give us the constrained equations of tangency
\begin{align}
\label{eq:cns.tangency.cns}
A^{\hat\alpha}_{\hat Jj} =\,& \pp[\Phi^{\hat\alpha}_{\hat J}]{x^j} + A^{\check\alpha}_{\check Jj}\pp[\Phi^{\hat\alpha}_{\hat J}]{u^{\check\alpha}_{\check J}},\\
\label{eq:cns.tangency.h0}
C_j
=\,& \pp[L^\cns]{x^j} + \sum_{|\check J|=0}^{k-1}u^{\check\alpha}_{\check J+1_j}\rbra\pp[L^\cns]{u^{\check\alpha}_{\check J}}-\!\!\sum_{I+1_i=\check J}p^{Ii}_{\check\alpha}\rket\\
\nonumber
 \,& - \sum_{|\hat J|=0}^{k-1}u^{\hat\alpha}_{\hat J+1_j}\!\!\sum_{I+1_i=\hat J}p^{Ii}_{\hat\alpha} - \sum_{|\hat K|=k}\pp[\Phi^{\hat\alpha}_{\hat K}]{x^j}\!\!\sum_{I+1_i=\hat K}p^{Ii}_{\hat\alpha} - \sum_{|I|=0}^{k-1}u^\alpha_{I+1_i}B^{Ii}_{\alpha j},\\
\label{eq:cns.tangency.w1}
\sum_{I+1_i=\check K}\!\!B^{Ii}_{\check\alpha j}
=\,&
\frac{\partial^2L^\cns}{\partial x^j \partial u^{\check\alpha}_{\check K}} -
\sum_{I+1_i=\hat K}\!p^{Ii}_{\hat\alpha}\frac{\partial^2\Phi^{\hat\alpha}_{\hat K}}{\partial x^j \partial u^{\check\alpha}_{\check K}}\\
\nonumber
 \,& + A^{\check\beta}_{\check Jj} \prth{
\frac{\partial^2L^\cns}{\partial u^{\check\beta}_{\check J} \partial u^{\check\alpha}_{\check K}} - \!\!\!
\sum_{I+1_i=\hat K}\!p^{Ii}_{\hat\alpha}\frac{\partial^2\Phi^{\hat\alpha}_{\hat K}}{\partial u^{\check\beta}_{\check J} \partial u^{\check\alpha}_{\check K}} } - \!\!\!
\sum_{I+1_i=\hat K}\!\!B^{Ii}_{\hat\alpha j}\pp[\Phi^{\hat\alpha}_{\hat K}]{u^{\check\alpha}_{\check K}},\ |\check K|=k.
\end{align}

\begin{proposition} \label{th:multisymplectic.iff.cns-regular}
Let $\Omega^\cns_2$ be the pullback of the premultisymplectic form $\Omega_\HH$ to $W^\cns_2$ by the natural inclusion $i:W^\cns_2\hookrightarrow W$, that is $\Omega^\cns_2=i^*(\Omega_\HH)$. Suppose that $m=\dim M>1$, then the $(m+1)$-form $\Omega^\cns_2$ is multisymplectic if and only if $\LL$ is \emph{regular along $W^\cns_2$}, \ie\ if and only if the matrix
\begin{equation} \label{eq:cns-hessian}
\prth{\frac{\partial^2L^\cns}{\partial u^{\check\beta}_{\check R} \partial u^{\check\alpha}_{\check K}} - \!\!\!
\sum_{I+1_i=\hat K}\!p^{Ii}_{\hat\alpha}\frac{\partial^2\Phi^{\hat\alpha}_{\hat K}}{\partial u^{\check\beta}_{\check R} \partial u^{\check\alpha}_{\check K}}}_{|\check R|=|\check K|=k}
\end{equation}
is non-degenerate along $W^\cns_2$.
\end{proposition}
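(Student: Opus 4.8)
The plan is to separate the two requirements in the definition of a multisymplectic form—closedness and non-degeneracy of the contraction—and to observe that only the latter has content here. Indeed, since $\Omega=-\diff\Theta$ and $\Omega_\HH=\Omega+\diff\HH$, the premultisymplectic form is exact, $\Omega_\HH=\diff(\HH-\Theta)$, hence closed; as $i^*$ commutes with $\diff$, its restriction $\Omega^\cns_2=i^*(\Omega_\HH)$ is closed as well. Thus the proposition amounts to proving that, at each $w\in W^\cns_2$, the linear map $V\in T_wW^\cns_2\mapsto i_V\Omega^\cns_2$ has trivial kernel if and only if the matrix \eqref{eq:cns-hessian} is non-degenerate along $W^\cns_2$.

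First I would translate the kernel condition into the ambient space $W$ by means of the identity $i_V(i^*\Omega_\HH)=i^*(i_{i_*V}\Omega_\HH)$, valid for $V\in T_wW^\cns_2$: the vanishing of $i_V\Omega^\cns_2$ is equivalent to requiring the $m$-form $i_{i_*V}\Omega_\HH$ to restrict to zero on $W^\cns_2$, that is, to annihilate every $m$-tuple of vectors tangent to $W^\cns_2$. I would then fix adapted coordinates in which $W^\cns_2$ is cut out exactly by the relations \eqref{eq:wc2}, namely $u^{\hat\alpha}_{\hat J}=\Phi^{\hat\alpha}_{\hat J}$, the Hamiltonian constraint $p=L-p^{Ii}_\alpha u^\alpha_{I+1_i}$, and the top-level relation \eqref{eq:cns.dynamics.top}. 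A tangent vector $i_*V$ then has components $(V^j_x,V^\alpha_J,V^{Ii}_\alpha,V_p)$ in which the constrained velocity components $V^{\hat\alpha}_{\hat J}$ are slaved to the free ones through $\diff\Phi^{\hat\alpha}_{\hat J}$, the component $V_p$ is slaved through the differential of the Hamiltonian constraint, and the top-order momenta are coupled by the linearization of \eqref{eq:cns.dynamics.top}; the surviving free components are $V^j_x$, the $V^{\check\alpha}_{\check J}$ and part of the $V^{Ii}_\alpha$.

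Next I would substitute the local expression \eqref{eq:hamiltonian.form.coord} for $\Omega_\HH$ and compute $i_{i_*V}\Omega_\HH$, using $i_{\partial/\partial x^i}\dmx=\dmxi$. After pulling back (so that $\du^{\hat\alpha}_{\hat J}$ becomes $\diff\Phi^{\hat\alpha}_{\hat J}$, $\diff p$ is eliminated, and the top momenta are constrained), the resulting $m$-form splits along the independent monomials built from a factor $\dmxi$ and at most one further $\du^{\check\alpha}_{\check J}$ or $\diff p^{Ii}_\alpha$; setting each coefficient to zero yields a linear system in the free components of $V$. The part coming from $-\diff p^{Ii}_\alpha\wedge\du^\alpha_I\wedge\dmxi$ pairs momenta against velocities and, together with the $\dmx$-part, forces all the momentum components $V^{Ii}_\alpha$, all the velocity components $V^{\check\alpha}_{\check J}$ of order $|\check J|<k$, and $V^j_x$ to vanish—this is the same cancellation mechanism that produces the lower-order tangency equations. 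What remains is a single block relating the top-order velocity components $V^{\check\beta}_{\check R}$ ($|\check R|=k$) among themselves through precisely the coefficient matrix \eqref{eq:cns-hessian}, exactly as the coefficient of $A^{\check\beta}_{\check Jj}$ in the tangency equation \eqref{eq:cns.tangency.w1}.

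The equivalence then follows: the kernel of $V\mapsto i_V\Omega^\cns_2$ is trivial, hence $\Omega^\cns_2$ is multisymplectic, if and only if this top-order block is invertible, i.e. if and only if \eqref{eq:cns-hessian} is non-degenerate along $W^\cns_2$. The hypothesis $m>1$ enters precisely in the previous step, guaranteeing that the monomials carrying the $\dmxi$ factor are genuinely independent so that their coefficients may be read off one at a time; for $m=1$ the object $\Omega_\HH$ is an ordinary $2$-form and the purely mechanical case (overdetermined in the sense of the last theorem of Section \ref{sec:sr.formalism}) must be handled apart. The main obstacle is exactly the bookkeeping of the middle two paragraphs: correctly carrying the linearized constraints of \eqref{eq:wc2}—the substitution of $\diff\Phi^{\hat\alpha}_{\hat J}$ for the constrained velocity differentials and of the differentiated relation \eqref{eq:cns.dynamics.top} for the constrained momenta—through the multi-index sums, so that the surviving top-order matrix is identified cleanly with \eqref{eq:cns-hessian} and not with some other contraction of $L^\cns$ with the $\Phi^{\hat\alpha}_{\hat J}$.
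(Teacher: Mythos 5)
Your proposal is correct and follows essentially the same route as the paper's proof: both reduce multisymplecticity to triviality of the kernel of $v\mapsto i_v\Omega^\cns_2$, push the computation into the ambient $W$ in adapted coordinates, use the linearized constraints of \eqref{eq:wc2} to slave the constrained components of a tangent vector, observe the cancellation of the top-order velocity components except through the block \eqref{eq:cns-hessian}, and read off the equivalence. The only (harmless) addition is your explicit remark that closedness is automatic because $\Omega_\HH=\diff(\HH-\Theta)$ is exact, which the paper leaves implicit.
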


\begin{proof}
First of all, let us make some considerations. By definition, $\Omega^\cns_2$ is multisymplectic whenever $\Omega^\cns_2$ has trivial kernel, that is,
\[ \textrm{if}\ v\in TW_2,\ i_v\Omega^\cns_2=0\ \Longleftrightarrow\ v=0\ . \]
This is equivalent to say that
\[ \textrm{if}\ v\in i_*(TW_2),\ i_v\Omega_\HH|_{i_*(TW_2)}=0\ \Longleftrightarrow\ v=0\ . \]

Let $v\in TW$ be a tangent vector whose coefficients in an adapted basis are given by
\[ v = \gamma^i\frac{\partial}{\partial x^i} + A^\alpha_J\frac{\partial}{\partial u^\alpha_J} + B^{Ii}_\alpha\frac{\partial}{\partial p^{Ii}_\alpha} + C\frac{\partial}{\partial p}. \]
Using the expression \eqref{eq:canonical.form.coord}, we may compute the contraction of $\Omega_\HH$ by $v$,
\begin{equation} \label{eq:ivomega}
\begin{split}
i_v\Omega_\HH =
& - B^{Ii}_\alpha\du^\alpha_I\wedge\dmxi + A^\alpha_I\diff p^{Ii}_\alpha\wedge\dmxi- \gamma^j\diff p^{Ii}_\alpha\wedge\du^\alpha_I\wedge\diff^{m-2}x_{ij}\\
& + \prth{A^\alpha_{I+1_i}p^{Ii}_\alpha + B^{Ii}_\alpha u^\alpha_{I+1_i} - A^\alpha_J\pp[L]{u^\alpha_J}}\dmx\\
& - \gamma^j\prth{p^{Ii}_\alpha\du^\alpha_{I+1_i} + u^\alpha_{I+1_i}\diff p^{Ii}_\alpha - \pp[L]{u^\alpha_J}\du^\alpha_J}\wedge\diff^{m-1}x_j.
\end{split}
\end{equation}

In addition to this, let us consider a vector $v\in TW$ tangent to $W_2$, that is $v\in i_*(TW_2)$, we then have that
\[ \diff(u^{\hat\alpha}_{\hat J}-\Phi^{\hat\alpha}_{\hat J})(v)=0,
\quad
\diff\prth{\sum_{I+1_i=\check K}\!p^{Ii}_{\check\alpha}+\!\!\!\sum_{I+1_i=\hat K}\!p^{Ii}_{\hat\alpha}\pp[\Phi^{\hat\alpha}_{\hat K}]{u^{\check\alpha}_{\check K}} - \pp[L^\cns]{u^{\check\alpha}_{\check K}}}(v)=0
\quad\textrm{and}\quad
\diff H(v)=0, \]
which leads us to the following relations for the coefficients of $v$:
\begin{eqnarray}
\label{eq:dpsiv}
A^{\hat\alpha}_{\hat J} &=&
\gamma^j\pp[\Phi^{\hat\alpha}_{\hat J}]{x^j}
+ A^{\check\alpha}_{\check J}\pp[\Phi^{\hat\alpha}_{\hat J}]{u^{\check\alpha}_{\check J}},\\
\label{eq:dw1v}
\sum_{I+1_i=\check K}\!\!B^{Ii}_{\check\alpha} &=&
\gamma^j \prth{
\frac{\partial^2L^\cns}{\partial x^j \partial u^{\check\alpha}_{\check K}}
- \!\!\! \sum_{I+1_i=\hat K}\!p^{Ii}_{\hat\alpha}\frac{\partial^2\Phi^{\hat\alpha}_{\hat K}}{\partial x^j \partial u^{\check\alpha}_{\check K}} }\\
\nonumber
& &
+ A^{\check\beta}_{\check J} \prth{
\frac{\partial^2L^\cns}{\partial u^{\check\beta}_{\check J} \partial u^{\check\alpha}_{\check K}} - \!\!\!
\sum_{I+1_i=\hat K}\!p^{Ii}_{\hat\alpha}\frac{\partial^2\Phi^{\hat\alpha}_{\hat K}}{\partial u^{\check\beta}_{\check J} \partial u^{\check\alpha}_{\check K}} }
- \!\!\! \sum_{I+1_i=\hat K}\!\!B^{Ii}_{\hat\alpha}\pp[\Phi^{\hat\alpha}_{\hat K}]{u^{\check\alpha}_{\check K}}\\
\label{eq:dhv}
C &=&
\gamma^j \prth{
\pp[L^\cns]{x^j}
- \!\!\! \sum_{I+1_i=\hat K}\!p^{Ii}_{\hat\alpha}\pp[\Phi^{\hat\alpha}_{\hat K}]{x^j} }\\
\nonumber
& &
+ A^{\check\alpha}_{\check J} \prth{
\pp[L^\cns]{u^{\check\alpha}_{\check J}}
- \!\!\! \sum_{I+1_i=\check J}\!p^{Ii}_{\check\alpha}
- \!\!\! \sum_{I+1_i=\hat J}\!p^{Ii}_{\hat\alpha}\pp[\Phi^{\hat\alpha}_{\hat J}]{u^{\check\alpha}_{\check J}} }
- B^{Ii}_\alpha u^\alpha_{I+1_i}.
\end{eqnarray}
It is important to note that, even though in all the previous equations \eqref{eq:ivomega}, \eqref{eq:dpsiv}, \eqref{eq:dw1v} and \eqref{eq:dhv} explicitly appear $A$'s with multi-index of length $k$, for such a vector $v\in i_*(TW_2)$, the terms associated to these $A$'s cancel out in the development of $i_v\Omega_\HH$, Equation \eqref{eq:ivomega}, and the third tangency relation \eqref{eq:dhv}. Thus, a tangent vector $v\in i_*(TW_2)$ would kill $\Omega_\HH$ if and only if its coefficients satisfy the following relations
\[\begin{array}{c}
\displaystyle
\gamma^j=0,\ A^\alpha_I=0,\ B^{Ii}_\alpha=0,\ C=0,\\
\displaystyle
A^{\hat\alpha}_{\hat K}=A^{\check\alpha}_{\check K}\pp[\Phi^{\hat\alpha}_{\hat K}]{u^{\check\alpha}_{\check K}}\ \textrm{and}\ 
A^{\check\beta}_{\check R} \prth{
\frac{\partial^2L^\cns}{\partial u^{\check\beta}_{\check R} \partial u^{\check\alpha}_{\check K}} - \!\!\!
\sum_{I+1_i=\hat K}\!p^{Ii}_{\hat\alpha}\frac{\partial^2\Phi^{\hat\alpha}_{\hat K}}{\partial u^{\check\beta}_{\check R} \partial u^{\check\alpha}_{\check K}} } = 0.
\end{array}\]

These considerations being made, the assertion is now clear.
\end{proof}

\begin{proposition} \label{th:cns.euler-lagrange.geometric}.
Let $\sigma\in\Sections\pi_{W,M}$ be an integral section of a solution $X$ of the constrained dynamical equation \eqref{eq:cns.dynamical.equation}. Then, its ``Lagrangian part'' $\sigma_1=\pr_1\!\circ\,\sigma$ is holonomic, $\sigma_1=j^k\phi$ for some section $\phi\in\Sections\pi$, which furthermore satisfies the constrained higher-order Euler-Lagrange equations \eqref{eq:cns.euler-lagrange}.
\end{proposition}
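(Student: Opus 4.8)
The plan is to follow the strategy of the unconstrained Theorem~\ref{th:euler-lagrange.geometric} closely, simply carrying the multiplier terms $-\lambda_\mu\,\partial\Psi^\mu/\partial u^\alpha_J$ through every step. Normalise the solution by $\eta(X)=1$, so that $X=\bar X_1\wedge\dots\wedge\bar X_m$ with each $\bar X_j$ of the form \eqref{eq:tangent.vector}, and write $\sigma$ in an adapted chart as $\sigma=(x^i,\,u^\alpha_J\circ\sigma,\,p^{Ii}_\alpha\circ\sigma,\,p\circ\sigma)$. Saying that $\sigma$ is an integral section of $X$ amounts to $\sigma_*\!\prth{\pp{x^j}}=\bar X_j\circ\sigma$ for each $j$, which componentwise reads
\[ \pp[(u^\alpha_J\circ\sigma)]{x^j}=A^\alpha_{Jj}\circ\sigma,\qquad \pp[(p^{Ii}_\alpha\circ\sigma)]{x^j}=B^{Ii}_{\alpha j}\circ\sigma,\qquad \pp[(p\circ\sigma)]{x^j}=C_j\circ\sigma. \]

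First I would prove holonomy. Combining the equation of holonomy \eqref{eq:cns.holonomy.lag} with the first family of relations above yields $\pp[(u^\alpha_I\circ\sigma)]{x^i}=u^\alpha_{I+1_i}\circ\sigma$ for all $|I|=0,\dots,k-1$. Setting $\phi:=\pi_{k,0}\circ\sigma_1$ and $\phi^\alpha=u^\alpha\circ\sigma$, a straightforward induction on $|J|$ then gives $u^\alpha_J\circ\sigma=\pp[^{|J|}\phi^\alpha]{x^J}$ for every $|J|\le k$; that is, $\sigma_1=j^k\phi$ is holonomic, exactly as in Theorem~\ref{th:euler-lagrange.geometric}.

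Next I would read off the dynamics. Abbreviate $P^{Ii}_\alpha:=p^{Ii}_\alpha\circ\sigma$ and $\tilde\lambda_\mu:=\lambda_\mu\circ\sigma$; the identification of the multipliers carried by $X$ on $W$ with multipliers on $M$ in the sense of Proposition~\ref{th:cns.euler-lagrange} is precisely the content of Remark~\ref{rmk:lagrange.multipliers}. Pulling back the concise equations of dynamics \eqref{eq:cns.dynamics.lag.concise} along $\sigma$, using $\sigma_1=j^k\phi$ and the integral-section identity $B^{Jj}_{\alpha j}\circ\sigma=\sum_j\pp[P^{Jj}_\alpha]{x^j}$, produces the recursion
\[ \sum_{I+1_i=J}P^{Ii}_\alpha+\sum_j\pp[P^{Jj}_\alpha]{x^j}=E^\alpha_J,\qquad |J|=0,\dots,k, \]
where $E^\alpha_J:=\prth{\pp[L]{u^\alpha_J}-\tilde\lambda_\mu\pp[\Psi^\mu]{u^\alpha_J}}\circ j^k\phi$, the first sum being void for $|J|=0$ and the divergence term absent for $|J|=k$ (there are no momenta of order $k$).

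Finally I would form the alternating sum $\sum_{|J|=0}^k(-1)^{|J|}\pp[^{|J|}]{x^J}E^\alpha_J$ and substitute the recursion, splitting it into an undifferentiated-momentum contribution $S_1$ and a divergence contribution $S_2$. Reindexing $S_1$ by $I=J-1_i$ (so $J=I+1_i$, $|I|=0,\dots,k-1$) gives $S_1=-\sum_{|I|=0}^{k-1}\sum_i(-1)^{|I|}\pp[^{|I|+1}P^{Ii}_\alpha]{x^{I+1_i}}$, whereas relabelling $S_2$ yields the same expression with the opposite sign; hence $S_1+S_2=0$. By \eqref{eq:total.and.partial.derivatives} this vanishing is exactly the pullback by $j^{2k}\phi$ of the constrained higher-order Euler--Lagrange equations \eqref{eq:cns.euler-lagrange}, so $\phi$ together with $\tilde\lambda_\mu$ solves them. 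The combinatorial cancellation is routine once the recursion is in place; the step I expect to demand the most care is the multiplier bookkeeping of Remark~\ref{rmk:lagrange.multipliers} — matching the $W$-defined coefficients $\lambda_\mu$ of $X$ with the $M$-defined multipliers of Proposition~\ref{th:cns.euler-lagrange} — together with the passage from total derivatives on $J^k\pi$ to ordinary partials on $M$ along the holonomic section via \eqref{eq:total.and.partial.derivatives}.
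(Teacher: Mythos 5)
Your proof is correct and follows essentially the same route as the paper: the paper's own proof simply sets $L'=L-(\lambda_\mu\circ\sigma)\Psi^\mu$ and invokes the unconstrained argument of Theorem~\ref{th:euler-lagrange.geometric} for $L'$, which is precisely the computation you carry out explicitly (holonomy from \eqref{eq:cns.holonomy.lag}, the momentum recursion from \eqref{eq:cns.dynamics.lag.concise}, and the telescoping cancellation $S_1+S_2=0$). Your explicit treatment of the multiplier identification of Remark~\ref{rmk:lagrange.multipliers} and of the passage from total to partial derivatives via \eqref{eq:total.and.partial.derivatives} is a sound unpacking of what the paper delegates to the reference \cite{CmpsLnMrt09}.
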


\begin{proof}
If $X$ is locally expressed as in \eqref{eq:multivector}, we know that it must satisfy the equations of dynamics \eqref{eq:cns.dynamics.lag.bottom},  \eqref{eq:cns.dynamics.lag.top} and  \eqref{eq:cns.dynamics.lag.top}, for unknown Lagrange multipliers $\lambda_\mu$. If we note $\lambda_\mu'=\lambda_\mu\circ\sigma$ and $L'=L-\lambda_\mu'\Psi^\mu$, it suffices to follow the demonstration for $L'$ of Theorem \ref{th:euler-lagrange.geometric} which is proven in \cite{CmpsLnMrt09}.
\end{proof}

\section{Example}

Here, we study an incompressible fluid under control as in \cite{AgrSry08}. The corresponding equations are the Navier-Stokes one plus the divergence-free condition:
\begin{eqnarray}
\label{eq:navier-stokes}
\pp[\vv]t + \nabla_\vv\vv + \nabla\Pi &=& \nu\Delta\vv + \ff\\
\label{eq:divergence-free}
                       \nabla\cdot\vv &=& 0
\end{eqnarray}
where the vector field $\vv$ is the velocity of the fluid, $\ff$ is the field of exterior forces acting on the fluid, which will be our controls, and the scalar functions $\Pi$ and $\nu$ are the pressure and the viscosity, respectively. In particular, our case of interest is the two dimensional case on $\RR^2$ endowed with the standard metric. If we fix global Cartesian coordinates $(x,y)$ on $\RR^2$ and adapted coordinates $(x,y,u,v)$ on its tangent $T\RR^2=\RR^4$, the previous equations become
\begin{eqnarray}
\label{eq:navier-stokes.coord.a}
u_t + u\cdot u_x + v\cdot u_y + \partial_x\Pi &=& \nu\cdot(u_{xx}+u_{yy}) + F\\
\label{eq:navier-stokes.coord.b}
v_t + u\cdot v_x + v\cdot v_y + \partial_y\Pi &=& \nu\cdot(v_{xx}+v_{yy}) + G\\
\label{eq:divergence-free.coord}
                                    u_x + v_y &=& 0
\end{eqnarray}
where, with some abuse of notation, $\vv(t,x,y)=(u,v)$ and $\ff=(F,G)$.

We therefore look for time-dependent vector fields $\vv=(u,v)$ on $\RR^2$ that satisfy the Navier-Stokes equations \eqref{eq:navier-stokes.coord.a} and \eqref{eq:navier-stokes.coord.b} for a prescribed control $\ff=(F,G)$ and submitted to the free divergence condition \eqref{eq:divergence-free.coord}. Moreover, we look for such vector fields $\vv=(u,v)$ that are in addition optimal in the controls for the integral action
\begin{equation} \label{eq:integral.action.ns}
\Lact(\vv,R) = \frac12\int_R\norm\ff^2\dt\wedge\dx\wedge\dy.
\end{equation}

In order to apply the development of the present jet bundle framework, all of this is restated in the following way: We set a fiber bundle $\pi:E\To M$ by putting $M=\RR\times\RR^2$, $E=\RR\times T\RR^2$ and $\pi=(\pr_1,\pr_{\RR^2})$. We fix global adapted coordinates $(t,x,y,u,v)$ on $E$, which induce the corresponding global adapted coordinates on $J^k\pi$ and $J^k\pi^\dag$. Besides, we choose the volume form $\eta$ on $M$ to be $\dt\wedge\dx\wedge\dy$. Thus, the Lagrangian function $L:J^2\pi\to\RR$ is nothing else but
\[ L = \frac12(F^2+G^2), \]
where we obtain $F$ and $G$ as functions on $J^2\pi$ using the equations \eqref{eq:navier-stokes.coord.a} and \eqref{eq:navier-stokes.coord.b}.

To make the reading easier, we change slightly the coordinate notation of jet bundles to fit in this example: The coordinate ``velocities'' associated to $u$ and $v$ will still be labeled $u$ and $v$, respectively, with symmetric subindexes (as in the original equations); the coordinate ``momenta'' associated to $u$ and $v$ will now be labeled $p$ and $q$, respectively, with non-symmetric subindexes. Finally and as we will focus on the equations
of dynamics \eqref{eq:cns.dynamics.bottom}, \eqref{eq:cns.dynamics.mid} and \eqref{eq:cns.dynamics.top}, the coefficients in the local expression \eqref{eq:multivector} of a multivector $X$ associated to the coordinate momenta $p$ and $q$ will be labeled $B$ and $D$, respectively.

\begin{example}[The Euler equation] \label{ex:euler.equation}
We will first suppose that the fluid is Eulerian, that is, it has null viscosity. In this case, the Lagrangian function $L=(F^2+G^2)/2$ associated to the integral action \eqref{eq:integral.action.ns} is of first order when the ``Euler equations'', \eqref{eq:navier-stokes.coord.a} and  \eqref{eq:navier-stokes.coord.b} with $\nu=0$, are taken into account. In $J^1\pi$, we consider the divergence-free constraint submanifold $\cns=\set{z\in J^1\pi\,:\,u_x+u_y=0}$, which introduces a single Lagrange multiplier $\lambda$.

Proceeding with the theoretical machinery, we compute the bottom level equations of dynamics corresponding to those of \eqref{eq:dynamics.bottom}
\begin{eqnarray*}
0 &=& u_x\cdot F + v_x\cdot G - (B^t_t+B^x_x+B^y_y)\\
0 &=& u_y\cdot F + v_y\cdot G - (D^t_t+D^x_x+D^y_y)
\end{eqnarray*}
and the top level equations of dynamics (there are no middle ones) corresponding to those of \eqref{eq:dynamics.top}
\begin{align*}
p^t =\,& F                  & q^t =\,& G\\
p^x =\,& u\cdot F - \lambda & q^x =\,& u\cdot G\\
p^y =\,& v\cdot F           & q^y =\,& v\cdot G - \lambda
\end{align*}
We can dispose of the only Lagrange multiplier $\lambda$ by putting
\[ p^x-q^y = u\cdot F-v\cdot G, \]
what defines $W^\CC_1$ together with the top level equations of dynamics with no Lagrange multiplier.

From here, we may compute also the constrained Euler-Lagrange equations \eqref{eq:cns.euler-lagrange} for this problem, which are
\begin{align*}
\diff_tF + u\cdot\diff_xF + v\cdot\diff_yF + v_y\cdot F - v_x\cdot G &= \partial_x\lambda\\
\diff_tG + u\cdot\diff_xG + v\cdot\diff_yG + u_x\cdot G - u_y\cdot F &= \partial_y\lambda
\end{align*}
where $\diff_*=\dd*$.

Finally, we note that $L$ is not regular along $W^\cns_2$ since the square matrix, that correspond to \eqref{eq:cns-hessian},
\[\rbra\begin{array}{ccccc}
 1 & u & v & 0 & 0 \\
 u & u^2+v^2 & u\cdot v & -v & -u\cdot v \\
 v & u\cdot v & v^2 & 0 & 0 \\
 0 & -v & 0 & 1 & u \\
 0 & -u\cdot v & 0 & u & u^2
\end{array}\rket\]
has obviously rank 2. Here we have used as $u_x$ as independent (``check'') coordinate and $v_y$ as dependent (``hat'') coordinate.
\end{example}

\begin{example}[The Navier-Stokes equation] \label{ex:navier-stokes}
Now, we tackle the full problem of the Navier-Stokes equations. In this case, the Lagrangian function $L=(F^2+G^2)/2$ is of second order. In $J^2\pi$, we consider the constraint submanifold
\[ \cns = \set{z\in J^2\pi\ : \ u_x+u_y=0,\ u_{tx}+v_{ty}=0,\ u_{xx}+v_{xy}=0,\ u_{xy}+v_{yy}=0 } \]
which comes from the first order constraint \eqref{eq:divergence-free}, free divergence, and its consequences to second order (see Remark \ref{rmk:constraint.reduction}). These constraints introduce for Lagrange multiplier $\lambda$, $\lambda_t$, $\lambda_x$ and $\lambda_y$ that are associated to them respectively.

We now proceed like in the previous example by computing the equations of dynamics. In first place, we have the bottom level ones corresponding to those of \eqref{eq:dynamics.bottom}
\begin{eqnarray*}
0 &=& u_x\cdot F + v_x\cdot G - (B^t_t+B^x_x+B^y_y)\\
0 &=& u_y\cdot F + v_y\cdot G - (D^t_t+D^x_x+D^y_y)
\end{eqnarray*}
Note that they are formally the same as before. In second place, the mid level equations corresponding to those of \eqref{eq:dynamics.mid}
\begin{align*}
p^t =\,&        F - (B^{tt}_t+B^{tx}_x+B^{ty}_y)&
q^t =\,&        G - (D^{tt}_t+D^{tx}_x+D^{ty}_y)\\
p^x =\,& u\cdot F - (B^{xt}_t+B^{xx}_x+B^{xy}_y) + \lambda&
q^x =\,& u\cdot G - (D^{xt}_t+D^{xx}_x+D^{xy}_y)\\
p^y =\,& v\cdot F - (B^{yt}_t+B^{yx}_x+B^{yy}_y)&
q^y =\,& v\cdot G - (D^{yt}_t+D^{yx}_x+D^{yy}_y) + \lambda
\end{align*}
Note that formally they also coincide with the top level ones of the previous example but for the coefficients that now appear in them. And in third place, the top level equations corresponding to those of \eqref{eq:dynamics.top}
\begin{align*}
       p^{tt} =\,& 0                       &        q^{tt} =\,& 0\\
       p^{xx} =\,& -\nu\cdot F - \lambda_x &        q^{xx} =\,& -\nu\cdot G\\
       p^{yy} =\,& -\nu\cdot F             &        q^{yy} =\,& -\nu\cdot G - \lambda_y\\
p^{tx}+p^{xt} =\,& -\lambda_t              & q^{tx}+q^{xt} =\,& 0\\
p^{ty}+p^{yt} =\,& 0                       & q^{ty}+q^{yt} =\,& -\lambda_t\\
p^{xy}+p^{yx} =\,& -\lambda_y              & q^{xy}+q^{yx} =\,& -\lambda_x
\end{align*}
We can again get rid easily of the Lagrange multipliers by putting
\[p^{tx}+p^{xt}=q^{ty}+q^{yt} \qquad p^{xx}+\nu\cdot F=q^{xy}+q^{yx} \qquad p^{xy}+p^{yx}=q^{yy}+\nu\cdot G \]
what defines $W^\CC_1$ together with the top level equations of dynamics with no Lagrange multiplier.

From here, we may compute also the constrained Euler-Lagrange equations \eqref{eq:cns.euler-lagrange} for this problem, which are
\begin{eqnarray*}
2\partial^2_{tx}\lambda_t + \partial^2_{xx}\lambda_x + 2\partial^2_{xy}\lambda_y -
\partial_x\lambda
&=&  \partial^2_{xx}\nu\cdot F + 2\partial_x\nu\cdot\diff_xF + \nu\cdot\diff^2_{xx}F +\\
& &+ \partial^2_{yy}\nu\cdot F + 2\partial_y\nu\cdot\diff_yF + \nu\cdot\diff^2_{yy}F -\\
& &- \diff_tF - u\cdot\diff_xF - v\cdot\diff_yF - v_y\cdot F + v_x\cdot G\\
2\partial^2_{ty}\lambda_t + 2\partial^2_{xy}\lambda_x + \partial^2_{yy}\lambda_y -
\partial_y\lambda
&=&  \partial^2_{xx}\nu\cdot G + 2\partial_x\nu\cdot\diff_xG + \nu\cdot\diff^2_{xx}G +\\
& &+ \partial^2_{yy}\nu\cdot G + 2\partial_y\nu\cdot\diff_yG + \nu\cdot\diff^2_{yy}G -\\
& &- \diff_tG - u\cdot\diff_xG - v\cdot\diff_yG - u_x\cdot G + u_y\cdot F
\end{eqnarray*}
As before, the Lagrangian is not regular along $W^\cns_2$, what seems to be clear if we observe that $L$ is highly non-degenerate: It depends only on 4 of the 12 coordinates of second order. It is worthless to show its ``Hessian'', even though it is interesting to say that it is null only when $\nu$ is.
\end{example}

\section{Conclusions and future work}

In this paper, we have introduced an unambiguous  geometric formalism for higher order field theories subjected to constraints, with applications to optimal control of partial differential equations. Our theory is based on the classical Skinner and Rusk formalism and the theory of higher-order jet bundles. In the future, we will do a detailed study of how to derive a constraint algorithm derive, in particular we will pay attention on the necessary conditions in order to the algorithm do not stop at the primary constraint submanifold; for instance, restricting the dual jet bundle to an appropriate submanifold (in agreement with the ideas behind \cite{SndCrmp90}). The design of variational integrators for these higher order field equations will be also analyzed (see \cite{MrsdPkrsSkll01} for the first order case). 

Besides, it is worth mentioning now that the BRST construction for field theory has been used systematically to deal with the problem of quantizing Gauge Systems (\cf\ \cite{HnnxTtlb92}). Gauge systems are an instance of constrained field theories where the constraints arise because of the Gauge invariance of the system. The BRST construction is based on the introduction of auxiliary odd and even degrees of freedom (referred as ``ghosts'' in physics literature) in such a way that the supermanifold structure of the extended formalism allows for a cohomological description of the reduced spaces, i.e., of the relevant degrees of freedom from a physical point of view (see for instance \cite{IbrtMrn94} for a simple description of the reduction procedure in terms of supergeometry). This construction can be put in a nice geometrical setting for first order Lagrangian mechanics (see for instance \cite{FrgKlnd92}) and we are looking forward to extend such formalism for higher order Lagrangian mechanics and Lagrangian field theories.

\appendix

\section{Multivectors} \label{sec:multivectors}

Let $P$ be a $n$-dimensional differentiable manifold. Sections of $\Lambda^m(TP)$ (with $1\leq m\leq n$) are called \emph{$m$-multivector fields} in $P$. We will denote by $\Fields^m(P)$ the set of $m$-multivector fields in $P$. Given $X\in\Fields^m(P)$, for every $p\in P$, there exists an open neighborhood $U_p\subset P$ and $X_1,\ldots ,X_r\in\Fields(U_p)$ such that
\[ X\stackrel{U_p}{=}\sum_{1\leq i_1<\ldots <i_m\leq r} f^{i_1\ldots i_m}X_{i_1}\wedge\ldots\wedge X_{i_m} \]
with $f^{i_1\ldots i_m}\in\smooth(U_p)$ and $m\leq r\leq n$. A multivector field $X\in\Fields^m(P)$ is \emph{locally decomposable} if, for every $p\in P$, there exists an open neighborhood $U_p\subset P$ and $X_1,\ldots ,X_m\in\Fields (U_p)$ such that
\[ X\stackrel{U_p}{=}X_1\wedge\ldots\wedge X_m. \]
We will denote by $\Fields^m_d(P)$ the set of locally decomposable $m$-multivector fields in $P$.

Let $D\subseteq TP$ be an $m$-dimensional distribution. The sections of $\Lambda^mD$ are locally decomposable $m$-multivector fields in $P$. A locally decomposable $m$-multivector field $X\in\Fields^m_d(P)$ and an $m$-dimensional distribution $D\subseteq TP$ are \emph{associated} whenever $X$ is a section of $\Lambda^mD$. If $X,X'\in\Fields^m_d(P)$ are non-vanishing multivector fields associated with the same distribution $D$, then there exists a non-vanishing function $f\in\smooth(P)$ such that $X'=fX$. This fact defines an equivalence relation in the set of non-vanishing $m$-multivector fields in $P$, whose equivalence classes will be denoted by $\DD(X)$. There is a bijective correspondence between the set of $m$-dimensional orientable distributions $D$ in $TP$ and the set of the equivalence classes $\DD(X)$ of non-vanishing, locally decomposable $m$-multivector fields $X$ in $P$. By abuse of notation, $\DD(X)$ will also denote the $m$-dimensional orientable distribution $D$ in $TP$ with whom $X$ is associated.

An $m$-dimensional submanifold $S\hookrightarrow P$ is said to be an \emph{integral manifold} of $X\in\Fields^m(P)$ (resp. of an $m$-dimensional distribution $D$ in $TP$) if $X$ spans $\Lambda^mTS$ (resp. if $TS=D$). In such a case, $X$ (resp. $D$) is said to be \emph{integrable}. Integrable multivector fields shall be locally decomposable. A non-vanishing, locally decomposable $m$-multivector $X\in\Fields^m_d(P)$ is \emph{involutive} if its associated distribution $\DD(X)$ is involutive, that is, if $\sbra\DD(X),\DD(X)\sket\subseteq\DD(X)$. If a non-vanishing multivector field $X\in\Fields^m_d(P)$ is involutive, so is every other in its equivalence class $\DD(X)$. By Frobenius' theorem, a non-vanishing and locally decomposable multivector field is integrable if, and only if, it is involutive.

Now, let $\pi:P\To M$ be a fiber bundle with $\dim M=m$. A multivector field $X\in\Fields^m(P)$ is said to be \emph{$\pi$-transverse} if $\Lambda^m\pi_*(X)$ does not vanish at any point of $M$, hence $M$ must be orientable. If $X\in\Fields^m(P)$ is integrable, then $X$ is $\pi$-transverse if, and only if, its integral manifolds are sections of $\pi:P\To M$. In this a case, if $S$ is an integral manifold of $X$, then there exists a section $\phi\in\Sections\pi$ such that $S=\mathrm{Im}(\phi)$.

For more details on multivector fields and their relation with field theories, refer to \cite{MnzRmn98,MnzRmn99}.

\section{The multi-index notation} \label{sec:multi-indexes}

Given a function $f:\RR^m\To\RR$, its partial derivatives are classically denoted
\[ f_{i_1i_2\cdots i_k} = \frac{\partial^kf}{\partial x_{i_1}\partial x_{i_2}\cdots\partial x_{i_k}}. \]
When smooth functions are considered, their crossed derivatives coincide. Thus, the order in which the derivatives are taken is not important, but the number of times with respect to each variable.

Another notation to denote partial derivatives is defined through ``symmetric'' multi-indexes (see \cite{Snd89}). A multi-index $I$ will be an $m$-tuple of non-negative integers. The $i$-th component of $I$ is denoted $I(i)$. Addition and subtraction of multi-indexes are defined component-wise (whenever the result is still a multi-index), $(I\pm J)(i)=I(i)\pm J(i)$. The length of $I$ is the sum $|I|=\sum_iI(i)$, and its factorial $I!=\Pi_iI(i)!$. In particular, $1_i$ will be the multi-index that is zero everywhere except at the $i$-th component which is equal to 1.

Keeping in mind the above definition, we shall denote the partial derivatives of a function $f:\RR^m\To\RR$ by:
\[ f_I = \pp[^{|I|}f]{x^I} = \frac{\partial^{I(1)+I(2)+\dots+I(m)}f}{\partial x_1^{I(1)}\partial x_2^{I(2)}\cdots\partial x_m^{I(m)}}. \]
Thus, given a multi-index $I$, $I(i)$ denotes the number of times the function is differentiated with respect to the $i$-th component. The former notation should not be confused with the latter one. For instance, the third order partial derivative $\frac{\partial^3f}{\partial x_2\partial x_3\partial x_2}$ (with $f:\RR^4\To\RR$) is denoted $f_{232}$ and $f_{(0,2,1,0)}$, respectively.

Here we present some simple but useful results on multi-indexes.

\begin{lemma} \label{th:fubini}
Let $\set{a_{I,i}}_{I,i}$ be a family of real numbers indexed by a multi-index $I\in\nat^m$ and by an integer $i$ such that $1\leq i\leq m$. Given an integer $k\geq1$, we have that
\begin{equation} \label{eq:fubini}
\sum_{|I|=k-1} \sum_{i=1}^m a_{I,i} = \sum_{|J|=k} \sum_{I+1_i=J} a_{I,i}.\end{equation}
\end{lemma}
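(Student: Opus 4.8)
The plan is to read \eqref{eq:fubini} as two ways of organizing one and the same double sum, so that the identity follows from a Fubini-type reindexing supported by an explicit bijection of index sets. On the left, the summand $a_{I,i}$ is summed over
\[ A = \set{(I,i)\ :\ |I|=k-1,\ 1\leq i\leq m}, \]
whereas on the right it is summed over
\[ B = \set{(J,I,i)\ :\ |J|=k,\ I+1_i=J}. \]
Since in both cases the summand depends only on the pair $(I,i)$, it suffices to match $A$ and $B$ bijectively in a way that preserves that pair.

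First I would check that the assignment $(I,i)\mapsto(I+1_i,I,i)$ maps $A$ into $B$: if $|I|=k-1$, then $|I+1_i|=|I|+1=k$, so $J:=I+1_i$ is a multi-index of length $k$ and the defining relation $I+1_i=J$ holds by construction. In the other direction, the forgetful map $(J,I,i)\mapsto(I,i)$ sends $B$ into $A$, because $I+1_i=J$ with $|J|=k$ forces $|I|=|J|-1=k-1$, while $i$ necessarily satisfies $1\leq i\leq m$. These two maps are visibly mutually inverse, hence each is a bijection, and each carries the pair $(I,i)$ to itself.

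The only point that warrants a word is that, for a fixed $J$ with $|J|=k$, the pairs $(I,i)$ satisfying $I+1_i=J$ are precisely those of the form $I=J-1_i$ with $J(i)\geq1$ (for $J(i)=0$ the difference $J-1_i$ is not a multi-index); this confirms that the inner sum on the right ranges over exactly the admissible, non-repeated preimages, so that the bijection neither omits nor duplicates any term. With the correspondence settled, both members of \eqref{eq:fubini} equal $\sum_{(I,i)\in A}a_{I,i}$, which closes the argument. I expect no genuine obstacle here: the content is purely the bookkeeping of the increment relation $I+1_i=J$, and the main care is simply verifying that this relation sets up an exact one-to-one correspondence between $A$ and $B$.
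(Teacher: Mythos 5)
Your proof is correct, and it is exactly the intended argument: the paper states this lemma without proof (its label even reads \texttt{th:fubini}), and the content is precisely the reindexing bijection $(I,i)\leftrightarrow(J,I,i)$ with $J=I+1_i$ that you spell out, including the check that each admissible pair is counted exactly once on the right-hand side.
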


\begin{lemma} \label{th:identity}
Let $J\in\nat^m$ be a non-zero multi-index. We have that
\begin{equation} \label{eq:identity}
\sum_{I+1_i=J} \frac{I(i)+1}{|I|+1} = 1.
\end{equation}
\end{lemma}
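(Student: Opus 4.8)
The plan is to evaluate the sum directly by making the single admissible substitution that the constraint $I+1_i=J$ forces. First I would observe that, for a fixed index $1\leq i\leq m$, the equation $I+1_i=J$ admits a solution $I\in\nat^m$ if and only if $J(i)\geq1$, in which case the unique solution is $I=J-1_i$. Consequently the sum over pairs $(I,i)$ subject to $I+1_i=J$ collapses to a single sum ranging over those indices $i$ with $J(i)\geq1$.

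Next I would simplify each resulting summand. For $I=J-1_i$ one has $I(i)=J(i)-1$, so the numerator becomes $I(i)+1=J(i)$; likewise $|I|=|J|-1$, so the denominator becomes $|I|+1=|J|$. Here the hypothesis that $J$ is non-zero is exactly what guarantees $|J|\neq0$, so the denominator never vanishes. Hence every summand is equal to $J(i)/|J|$.

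Finally I would carry out the summation. Since the terms with $J(i)=0$ contribute nothing to the numerator, I may harmlessly extend the range to all $i\in\{1,\dots,m\}$, obtaining
\[ \sum_{I+1_i=J}\frac{I(i)+1}{|I|+1} = \frac{1}{|J|}\sum_{i=1}^m J(i) = \frac{|J|}{|J|} = 1, \]
where the last step uses the definition $|J|=\sum_{i}J(i)$. I do not expect any genuine obstacle in this argument; the only point demanding a modicum of care is the index bookkeeping, namely verifying that the original sum really ranges only over indices $i$ with $J(i)\geq1$ and that extending the completed sum to all $i$ introduces no spurious contributions.
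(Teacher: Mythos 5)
Your argument is correct and is the intended elementary computation: the constraint $I+1_i=J$ forces $I=J-1_i$ for each $i$ with $J(i)\geq1$, each summand collapses to $J(i)/|J|$, and the sum telescopes to $|J|/|J|=1$. The paper states this lemma without proof as a "simple but useful result," and your verification fills that gap in exactly the way the authors evidently had in mind, including the correct use of the hypothesis $J\neq0$ to rule out division by zero.
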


\begin{lemma} \label{th:lower.sum}
Let $\set{a_J}_J$ be a family of real numbers indexed by a multi-index $J\in\nat^m$. Given a positive integer $l\geq1$, we have that
\begin{equation} \label{eq:lower.sum}
\sum_{|J|=l} a_J = \sum_{|I|=l-1} \sum_{i=1}^m \frac{I(i)+1}{|I|+1}a_{I+1_i},\end{equation}
\end{lemma}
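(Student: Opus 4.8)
The plan is to transform the right-hand side of \eqref{eq:lower.sum} into the left-hand side by reindexing the double sum and then collapsing the inner sum with the two preceding combinatorial lemmas. First I would apply the Fubini-type identity \eqref{eq:fubini} of Lemma \ref{th:fubini} with $k=l$ to the family $a_{I,i}:=\frac{I(i)+1}{|I|+1}\,a_{I+1_i}$. This rewrites the double sum $\sum_{|I|=l-1}\sum_{i=1}^m$ appearing on the right of \eqref{eq:lower.sum} as $\sum_{|J|=l}\sum_{I+1_i=J}$, where the outer index $J$ now ranges over all multi-indices of length $l$.

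The decisive simplification then happens inside the new inner sum. Because the constraint $I+1_i=J$ forces $a_{I+1_i}=a_J$ and $|I|+1=|J|=l$, both $a_J$ and the denominator $l$ are constant throughout the inner summation and may be pulled out, leaving $\sum_{|J|=l}\frac{a_J}{l}\sum_{I+1_i=J}(I(i)+1)$. At this point I would invoke the identity \eqref{eq:identity} of Lemma \ref{th:identity}, which (after clearing the uniform denominator $|I|+1=l$) reads $\sum_{I+1_i=J}(I(i)+1)=l$ for the nonzero multi-index $J$ of length $l$. Substituting this in yields $\sum_{|J|=l}\frac{a_J}{l}\cdot l=\sum_{|J|=l}a_J$, which is the left-hand side.

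Since both auxiliary lemmas are already at hand, no essential obstacle arises; the only point requiring care is the bookkeeping in the first step, namely verifying that $(I,i)\mapsto(J,i)$ with $J=I+1_i$ sets up a term-by-term correspondence between the two double sums. If one prefers to avoid Lemma \ref{th:fubini} entirely, an equivalent and perhaps more transparent route is to fix a target $J$ with $|J|=l$ on the right and read off its total coefficient directly: the decompositions $J=I+1_i$ are in bijection with the indices $i$ for which $J(i)\geq1$, and each contributes $\frac{I(i)+1}{|I|+1}=\frac{J(i)}{l}$, so the coefficient of $a_J$ is $\sum_{i:\,J(i)\geq1}\frac{J(i)}{l}=\frac{1}{l}\sum_{i=1}^m J(i)=\frac{|J|}{l}=1$. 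Either way the identity follows.
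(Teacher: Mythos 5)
Your argument is correct: combining Lemma \ref{th:fubini} (with $k=l$) and Lemma \ref{th:identity} (applicable since $|J|=l\geq1$ makes $J$ non-zero) in the way you describe, or equivalently reading off the coefficient of each $a_J$ directly as $\sum_{i:\,J(i)\geq1}J(i)/l=1$, establishes \eqref{eq:lower.sum}. The paper states this lemma without proof among its ``simple but useful results on multi-indexes,'' and your derivation is exactly the intended one-line consequence of the two preceding lemmas.
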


\begin{lemma} \label{th:lower.paired.sum}
Let $\set{a_J,b^J}_J$ be a family of real numbers indexed by a multi-index $J\in\nat^m$. Given an integer $l\geq1$, we have that
\begin{equation} \label{eq:lower.paired.sum}
\sum_{|J|=l} b^Ja_J = \sum_{|I|=l-1} \sum_{i=1}^m \frac{I(i)+1}{|I|+1}(b^{I+1_i}+Q^{I,i})a_{I+1_i},
\end{equation}
where $\set{Q^{I,i}}_{I,i}$ is a family of real numbers such that for any multi-index $J\in\nat^m$ (with $|J|\geq1$) we have that
\begin{equation} \label{eq:q-condition}
\sum_{I+1_i=J} \frac{I(i)+1}{|I|+1}Q^{I,i}=0.
\end{equation}
\end{lemma}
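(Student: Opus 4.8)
The plan is to verify the identity by expanding its right-hand side into two pieces, one carrying the factor $b^{I+1_i}$ and one carrying the factor $Q^{I,i}$, and then to show that the first piece reproduces the left-hand side while the second vanishes identically. This reduces the whole statement to the two combinatorial facts already established for multi-indexes, namely Lemma~\ref{th:fubini} (the interchange of summation order) and Lemma~\ref{th:identity} (the normalization $\sum_{I+1_i=J}\frac{I(i)+1}{|I|+1}=1$), together with the defining property \eqref{eq:q-condition} of the numbers $Q^{I,i}$. Thus no new machinery is needed; the argument is purely a rearrangement of sums.

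First I would treat the $b$-term $\sum_{|I|=l-1}\sum_{i=1}^m\frac{I(i)+1}{|I|+1}b^{I+1_i}a_{I+1_i}$. Applying Lemma~\ref{th:fubini} to the summand $\frac{I(i)+1}{|I|+1}b^{I+1_i}a_{I+1_i}$ rewrites this double sum as $\sum_{|J|=l}\sum_{I+1_i=J}\frac{I(i)+1}{|I|+1}b^{I+1_i}a_{I+1_i}$; since $I+1_i=J$ forces $b^{I+1_i}=b^J$ and $a_{I+1_i}=a_J$, the factor $b^Ja_J$ may be pulled out of the inner sum, leaving $\sum_{|J|=l}b^Ja_J\prth{\sum_{I+1_i=J}\frac{I(i)+1}{|I|+1}}$. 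Because $|J|=l\geq1$, the multi-index $J$ is non-zero and Lemma~\ref{th:identity} makes the inner sum equal to $1$, so this piece collapses exactly to the left-hand side $\sum_{|J|=l}b^Ja_J$.

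It then remains to check that the $Q$-term $\sum_{|I|=l-1}\sum_{i=1}^m\frac{I(i)+1}{|I|+1}Q^{I,i}a_{I+1_i}$ is zero. Applying Lemma~\ref{th:fubini} once more and using $a_{I+1_i}=a_J$ gives $\sum_{|J|=l}a_J\prth{\sum_{I+1_i=J}\frac{I(i)+1}{|I|+1}Q^{I,i}}$, and the inner sum vanishes for every $J$ with $|J|=l\geq1$ precisely by hypothesis \eqref{eq:q-condition}; hence this contribution is identically zero and adding the two pieces yields the claim. The only point demanding care—and where a slip is most likely—is the bookkeeping in the reindexing: one must track that $I+1_i=J$ entails $|I|=l-1$, so that $|I|+1=|J|=l$ and the weights appearing through Lemmas~\ref{th:fubini} and~\ref{th:identity} coincide with those in the statement. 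Beyond this there is no genuine obstacle, since the hypotheses are arranged so that the $b$-part telescopes via the normalization identity while the $Q$-part is annihilated by \eqref{eq:q-condition}.
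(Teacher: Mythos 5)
Your proof is correct. The paper actually states this lemma in Appendix~\ref{sec:multi-indexes} without giving any proof, but your argument is exactly the intended one: split off the $b$-part, collapse it to the left-hand side via Lemma~\ref{th:fubini} followed by the normalization of Lemma~\ref{th:identity} (equivalently, apply Lemma~\ref{th:lower.sum} to the family $\set{b^Ja_J}$), and kill the $Q$-part by reindexing with Lemma~\ref{th:fubini} and invoking \eqref{eq:q-condition}; your bookkeeping of $|I|+1=|J|=l$ is also right.
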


\section*{Acknowledgments}
This work has been partially supported by the MICINN, Ministerio de Ciencia e Innovación (Spain), project MTM2007-62478, project ``Ingenio Mathematica'' (i-MATH) No. CSD 2006-00032 (Consolider-Ingenio 2010). We want to thank specially professor Alberto Ibort Latre for bringing us some insight on the BRST formalism. The first author (C. C.) also acknowledges the MICINN for an FPI grant.


\end{document}